\newtheorem{proposition}{Proposition}
\newacronym{ind}{ind.}{individual}
\newacronym{hh}{hh.}{household}
\newacronym{inds}{inds.}{individuals}
\newacronym{hhs}{hhs.}{households}
\renewcommand{\vec}[1]{\mathbf{#1}}
\newcommand{\A}{\boldsymbol{A}}
\newcommand{\B}{\boldsymbol{B}}
\newcommand{\ER}{Erd\H os-R\'enyi}
\newcommand{\acl}[1]{\acrlong{#1}}
\newcommand{\abs}[1]{\left|#1\right|}
\title{\vspace{-1.5cm}The illusion of households as entities in social networks}
\author{Izabel Aguiar\\ Santa Fe Institute \\ \texttt{izabel@santafe.edu} 
      \and
      Philip S. Chodrow\\Middlebury College\\  \texttt{pchodrow@middlebury.edu}
      \and
      Johan Ugander\\ Stanford University\\ \texttt{jugander@stanford.edu}}
\date{}
\begin{document} 
\maketitle

\begin{abstract}
Data recording connections between people in communities and villages are collected and analyzed in various ways, most often as either networks of individuals or as networks of households. These two networks can differ in substantial ways. The methodological choice of \emph{which} network to study, therefore, is an important aspect in both study design and data analysis. In this work, we consider various key differences between household and individual social network structure, and ways in which the networks cannot be used interchangeably. In addition to formalizing the choices for representing each network, we explore the consequences of how the results of social network analysis change depending on the choice between studying the individual and household network---from determining whether networks are assortative or disassortative to the ranking of influence-maximizing nodes. As our main contribution, we draw upon related work to propose a set of systematic recommendations for determining the relevant network representation to study. Our recommendations include assessing a series of \textit{entitativity criteria} and relating these criteria to theories and observations about patterns and norms in social dynamics at the household level: notably, how information spreads within households and how power structures and gender roles affect this spread. We draw upon the definition of an \textit{illusion of entitativity} to identify cases wherein grouping people into households does not satisfy these criteria or adequately represent given cultural or experimental contexts. Given the widespread use of social network data for studying communities, there is broad impact in understanding which network to study and the consequences of that decision. We hope that this work gives guidance to practitioners and researchers collecting and studying social network data.
\end{abstract}

\section{Introduction} \label{sec:intro}
Social networks represent how people are connected to one another, for example, through physical contact, friendship, or the lending and borrowing of goods. In many studies of social connections in communities, towns, or villages, social networks are typically collected and analyzed at one of two representational levels: individual or household. That is, when considering how people are connected with one another, we can consider which \textit{individuals} are connected to one another or which \textit{households} are connected to one another. There is therefore an important choice to be made in any study of such a social network: should the nodes represent individuals or households, and how should the relationships between the nodes be represented? In this work we focus on emphasizing and formalizing this choice, exploring possible consequences of it, and systematizing recommendations for how to make it.

Often \citep[e.g.,][]{banerjee2013, airoldi2024}, \textit{household network} data is collected by first collecting the \textit{individual network}, and then grouping those individuals (and their connections) together with those who live in the same household (see \Cref{fig:hh_ind_diagram}). When household networks are constructed in this way, two key assumptions are implicitly made: (i) that an individual's social connections are shared and can be utilized by every other individual in their household, and (ii) that households are connected through individual relationships. In practice, the impacts of these implicit assumptions are not considered and thus individually-driven processes and relationships are conflated with household ones. Given that both the household and individual networks represent the same \textit{idea}\textemdash how the same set of people are connected to one another\textemdash the distinctions between household and individual networks is nuanced. 

\begin{figure}
    \centering
    \includegraphics[width=0.8\linewidth]{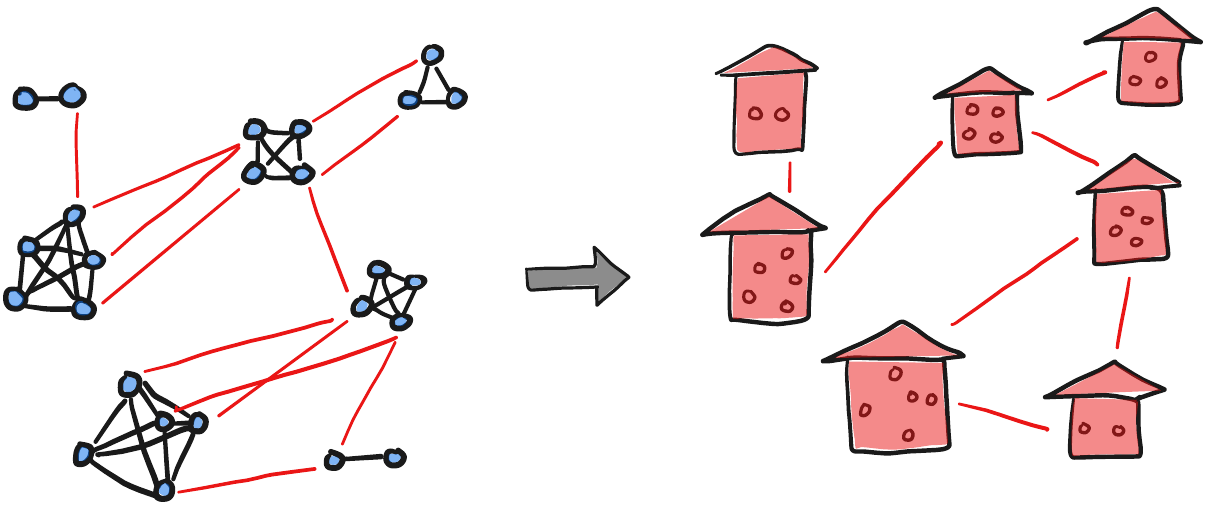}
    \caption{This work considers the choice of which network is the most appropriate to study in a given context, a choice which we present as consequential for meaningful empirical network analysis. In this diagram we show how individual networks (left, blue) are often translated to household networks (right, red). In what we refer to throughout as the \textit{individual network}, nodes are individuals and an edge between two individuals is collected through a survey. In what we refer to as the \textit{household network}, nodes are households and an edge between two households is usually determined by aggregating the relationships between individuals in that household. We specify this decision to represent household edges in this way as the \textit{basic household contraction rule} and propose alternate methods for defining edges between households in \Cref{subsubsec:cont_rules}. In the diagram here, we also represent individuals within the same household as completely connected to one another, an aspect of some individual network datasets which we discuss in more detail in \Cref{subsub:local}. We review how individual and household networks are collected and studied in practice in \Cref{tab:review}.}
    \label{fig:hh_ind_diagram}
\end{figure}

\begin{figure}
    \centering
    \includegraphics[width=.95\linewidth]{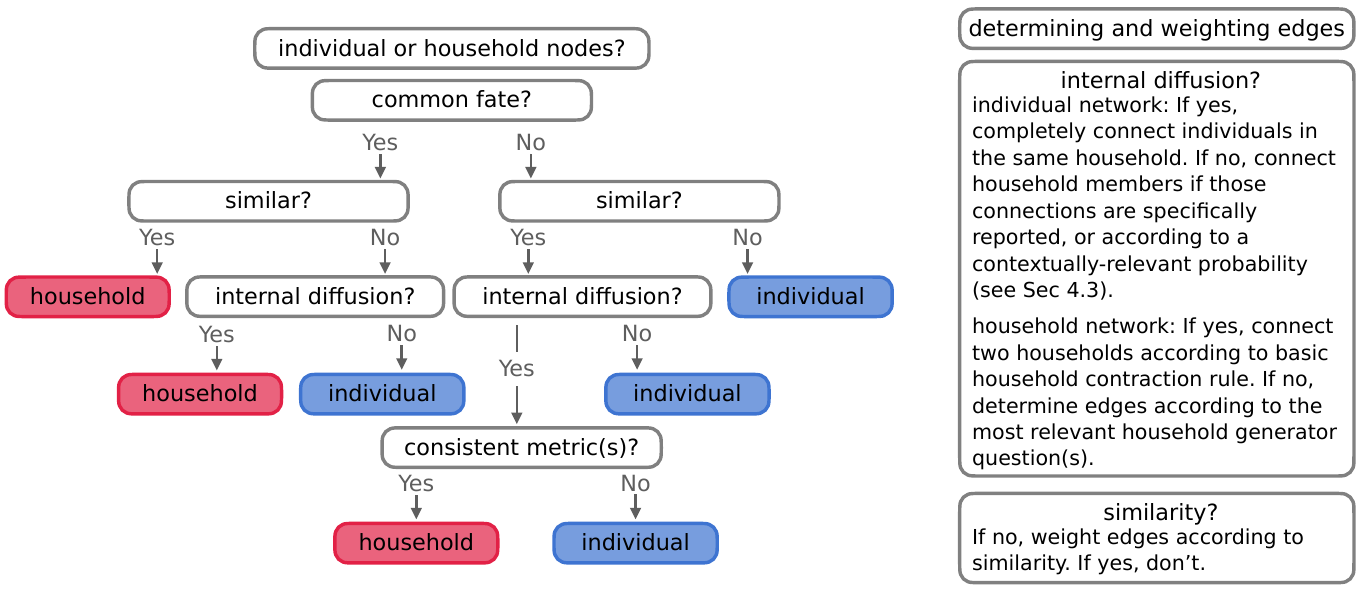}
    \caption{In this work we provide a systematic recommendation for determining whether the household or individual network should be studied given a particular context and experimental goal or intervention. The decision tree here poses a contextual evaluation of a set of entitativity criteria\textemdash \textit{proximity}, \textit{similarity}, \textit{common fate}, and \textit{internal diffusion}, which we discuss in detail in \Cref{sec:rec}\textemdash to determine an appropriate level of node aggregation, as well as to suggest how to weight edges. In \Cref{fig:ex_decision} we apply these recommendations to three separate examples to determine an appropriate network to analyze.}
    \label{fig:dec_chart}
\end{figure}

Plenty of social science research ranging from within sociology \citep{campbell1958common}, political science \citep{nickerson2008, auld2013inter, cheema2023canvassing}, anthropology \citep{werner1998, schmink1984household, niehof2011conceptualizing, koster2018family}, and network science \citep{kumar2024friendship} have theorized, measured, and observed that there are assumptions with consequences when studying individuals and their relationships interchangeably with their respective aggregates.

In this work, we contribute a bridging of insights and observations from disparate fields in an effort to systematize recommendations for researchers collecting, studying, or developing methods for social network data. We review and adopt insights from experimental work testing how political messages spread within households and detail relevant observations from ethnographic studies of household interactions. We adapt \citeauthor{campbell1958common}'s (\citeyear{campbell1958common}) theoretical \textit{entitativity criteria} for determining when a group of individuals can be reasonably studied as an aggregate, or when representing individuals as one, coherent household is inadequate in a given setting\textemdash an \textit{illusion}. By incorporating work from a broad range of disciplines, we can help researchers in making a rigorous choice for how to align their research question with an appropriate social network, so that their subsequent analysis and observations may be consistent with their research goals.

We explore the differences between household and individual networks through two approaches: through formalizing the choices in their representations and considering the consequences of those choices. To formalize the decision, we introduce notation for household and individual networks with their respective adjacency matrices and \textit{contraction rules} that map one to the other. We complement this decision with a discussion of how the individual and household networks can be further specified using weighted edges, including based on gendered connections, or more generally, the similarity of the nodes in specific contexts. 

We examine how different choices in this formalization can lead to substantially different conclusions based on various network metrics. Theoretically, we show a simple example showing that \textit{heterogeneous random node aggregations} in an \ER~random graph result in a network that is no longer \ER. We then consider how local metrics, the spread of information over a  network, and centrality metrics substantively differ on the individual and household village social support networks from \cite{banerjee2013}. Notably, we explore how analysis\textemdash and the corresponding conclusions about the social network\textemdash considerably differs between the individual and household networks, highlighting the importance of choosing the appropriate network to study in the context of a given research aim.

To assess when an individual or household network should be studied in a given context, we provide a systematic recommendation based on a series of \textit{entitativity criteria} \citep{campbell1958common}. We ground these recommendations in theories and experimental observations studying how individuals interact within households and as aggregates, as well as how gender and power interact in household networks to distinguish between the types of connections most relevant in a given setting \citep{werner1998}. We propose specific adaptations and extensions to the \textit{entitativity criteria} of \textit{proximity}, \textit{similarity}, \textit{common fate}, and \textit{internal diffusion} as they relate to studying networks in the context of interventions and experimental goals, which we organize as a decision tree in \Cref{fig:dec_chart}. In doing so, we relate the entitativity criteria to recommendations on how and when to collect and weight the relationships between individuals or households differently. To highlight examples of how to evaluate the criteria in different settings, we apply this set of recommendations to three different large-scale experimental network studies, \cite{banerjee2013}, \cite{alexander2022algorithms}, and \cite{airoldi2024}.

The structure of this work is as follows. 
In \Cref{sec:background} we briefly discuss the related work that we both build upon and directly use to make recommendations. In Sections \ref{sec:choices}, we formalize a set of possible ways to represent the household network given information about the individual network. Next, in \Cref{sec:consequences} we explore the impact of choosing either the household, individual, or a gendered network when studying a variety of different network metrics in empirical networks. In \Cref{sec:rec} we propose a set of systematic recommendations for how practitioners can decide which network is the most appropriate to study. We conclude in \Cref{sec:conclusion} with concluding remarks and suggestions for future work.

\section{Related Work}\label{sec:background}
The study of how individuals within social networks act and react in aggregate broadly spans political science, anthropology, and sociology. We briefly introduce this related work so that we may adapt it to make concrete recommendations the context of social networks in \Cref{sec:rec}. We note that the related work we consider here often spans or defies disciplinary boundaries. As considered in the network science literature, the abstract question we consider (of node-aggregation) is closely related to the literature on \textit{coarse-graining} \citep{itzkovitz2005coarse, gfeller2007spectral, kim2004geographical, klein2020emergence} where the goal is to explore \textit{how} to aggregate nodes while preserving some property of the network (e.g., degree distribution, spectral properties, properties of random walks). However, as we will see in this section, questions of how \textit{people and their social interactions} can be considered in aggregates have been considered in many disparate literatures.

\subsection*{Name generators and multilayer networks}\label{subsec:namegen}
The question of how individuals are related through social networks has been extensively studied in the literature on \textit{name generators} \citep{campbell1991name}, which specifically studies the social network survey question(s) being asked and the impact the choice of question has on the structure of the resulting network. In the present work, we echo the scholarship's contextual focus on \textit{which relationships we care about} and \textit{how to ask the relevant question(s)} to capture those relationships. Furthermore, we complement the line of work by considering \textit{which nodes} (individuals or households) we care about representing in a given context. Here we also see the connection between the present work, name generators, and \textit{multilayer networks} \citep[e.g.,][]{kivela2014multilayer}. Succinctly, multilayer networks can capture relationships given by multiple name generator questions, by representing each \textit{type} of relationship in a different layer. In the context of household networks, multilayer networks allow for the possibility of multiple types of household connections\textemdash whether gendered, neighbourly, individual, or shared\textemdash to be represented.

\subsection*{Epidemiology}
Within epidemiology there has been a focus on developing variations of epidemiological models to account for differences in transmission rates between individuals living within the same household, and how accounting for households and higher-order interactions changes the predicted spread of a disease \citep{becker1995effect, fraser2007estimating, st2021social, boccaletti2023structure}. This literature contributes not only specific modeling choices to account for the impact of households on disease transmission, but also makes more general graph-theoretical contributions. In \Cref{subsec:inf_dif} we draw upon the modeling distinctions from this body of work in how we model the spread of information between individuals within and beyond those living in the same household. In contrast to this work, we consider how household networks are constructed, the implications of treating households as social entities in social (non-biological) processes, and under what conditions it is appropriate to do so.

\subsection*{Diffusion of voter information campaigns within the household}\label{subsec:gotv} 
How information spreads and is shared within households, particularly in the context of political messaging, has been extensively experimentally tested. 
In \citet{nickerson2008}, the authors target households with two registered voters and deliver a ``Get out the Vote'' message to whichever household member answers the door, ultimately finding that 60\% of the propensity to vote is passed to the other member of the household. In considering how this finding might be impacted by gender differences, \cite{cheema2023canvassing} studies the difference between targeting women and men with voter information interventions in Pakistan. They find that female voter turnout increases by 5.4 and 8.0 percentage points, respectively, when either the man or both man and woman are targeted, as compared with only targeting the woman in a household. \citet{ferrali2022registers}, however, finds that household heads (regardless of their gender) have a strong influence over whether other adults in their household register to vote. \citet{bhatti2017voter} complicates this and related findings by suggesting that, rather than the \textit{information} from voter education campaigns being shared with other household members, the spillover of increased voter participation can be attributed to social pressure exerted by the targeted household member. That is, their work suggests that pure information does not diffuse within a household, but rather personal opinions and social pressure.

These experimental findings suggest that information shared with one member of a household does not definitively imply that all members of the household will receive or be impacted by it \citep{nickerson2008}. Moreover, the \textit{impact} of information on a household can be vastly different, depending on the gender and hierarchical position of \textit{who} in the household receives it \citep{cheema2023canvassing, ferrali2022registers}. Furthermore, observing an outcome does not distinguish between information and social pressure diffusing between household members \citep{bhatti2017voter}. 

Broadly, this body of work cautions against an assumption that information reaching one member of a household guarantees that it reaches every other household member, an assumption implicit in how household networks are constructed in practice. Indeed, \textit{why}, \textit{how}, and \textit{how much} information is shared between household members is contextually dependent on hierarchy within the household, gender norms and expectations, and the type of information being shared. 

\subsection*{Household strategies and household decision making}\label{subsec:housestrat}

The rich literature on \textit{household strategies}\textemdash which studies \textit{household networks} as forms of interaction and informal means of economic exchange\textemdash specifically distinguishes between individual relationships and household relationships. An observation key to our work is explicitly stated in \citeauthor{werner1998}'s ethnographic work on household strategies in Kazakhstan \citep{werner1998},  wherein she observes that it is ``wrong to assume that a household network equals the sum total of all household members' social bonds\dots In reality, a social bond established by one household member may or may not be at the disposal of other household members.'' Not all relationships that an individual has are accessible to all members of their household\textemdash the accessibility of particular connections may be dependent on context, gender, and power. A salient example of this claim is presented by \cite{werner1998}: a wife does not have direct access to her husband's professional connections, and her access and benefit of them must be mediated through him. 

Indeed, gender and power are intimately connected with \textit{how} household relationships are created and maintained, and the broader role of household networks is culturally, politically, and economically dependent \citep[see, e.g.,][]{werner1998, schmink1984household, wallace2002household, niehof2011conceptualizing, yotebieng2018household}. 

These ethnographic and theoretical findings are echoed in economic theory, where work shows that there is an economic advantage for a division of labor amongst members in a household \citep{becker1993treatise, lundberg2008gender}. This economic literature on \textit{household decision-making} departs from a more traditional economic model of the household as a single unit. By accounting for differences in the individuals within a household, they show that it is more efficient for the utility of a household as a whole if individuals specialize their attention and labor in either market or household decisions. Whereas the divide in decision-making responsibilities may be determined by gender, when there is any difference in the aptitudes, incentives, or interests of individuals within a household, specialization increases the efficiency of the household as a whole. These findings provide theoretical grounding that decisions, even if made at the household level, may be the primary responsibility of only one member of the household.

\subsection*{Entitativity}\label{subsec:entity}

In her ethnographic evaluation of household strategies, \cite{werner1998} remarks that, ``individuals are social actors, but households are not.'' Questioning \textit{when} a group of people can be considered as a social entity itself leads us to the foundational work of \cite{campbell1958common}, an in-depth theoretical assessment of this very question. \citeauthor{campbell1958common} looks towards the biological sciences to propose several ``metrics'' for determining what he terms \textit{entitativity}\footnote{In reference to this clunky word, we echo \citeauthor{campbell1958common} in apologizing that ``the present writer regrets adding two suffix syllables to the word \textit{entitative}, already three fourths suffixes.''}: the degree to which a group of people can be considered one entity.\footnote{We note that \citeauthor{campbell1958common} exclusively uses the word ``metric'' colloquially with respect to rules for assessing whether a group is an entity or not, and not paired with any quantitative sense of a metric. To avoid confusion with quantitative metrics discussed in the present work, we instead use the word \textit{criteria} for Campbell's notion of ``metric.''}

The several criteria \citeauthor{campbell1958common} proposes are those assessing the \textit{proximity}, \textit{similarity}, \textit{common fate}, \textit{internal diffusion}, and \textit{reflection or resistance to intrusion} of the individuals within a potential aggregate. We consider the first four criteria in more detail and within the context of household networks in \Cref{sec:rec}, and discuss the fifth criterion as an interesting direction for future work in \Cref{sec:conclusion}.

Importantly, \citeauthor{campbell1958common} cautions against what he calls \textit{illusions} of entitativity which occur when solely relying on \textit{one} criteria to determine when an aggregate of people can be considered an entity. He comments on the importance of confirming boundaries of entitativity with multiple criteria, writing that \textit{illusions often rely on superficial boundaries.} It is in this sense that we aim to diagnose when it is an \textit{illusion} to treat households as social entities: if, in a given context, the boundary of the household unit is not validated by other entitativity criteria.

We adapt the findings of this related work into a systematic recommendation for determining when the household or individual network should be collected and analyzed in a given setting. We organize our recommendations in  \Cref{fig:dec_chart}, where we propose a decision tree for determining what network to study in a given context. We discuss the evaluation criteria in detail in \Cref{sec:rec}.

\section{Choices} \label{sec:choices}
In this section, we first formalize the \textit{choice} to be made when studying a social network\textemdash broadly, to represent the network at either the individual or household level. The choice, in the language of networks, is twofold: what should the nodes represent, and what constitutes an edge between two nodes? In this work, we focus particularly on how these two questions are answered in the context of \textit{individual} and \textit{household} networks, although one could instead choose to consider aggregating individuals according to other affiliations, like club memberships, dorm rooms, or hobby groups. 

If the choice is to study the individual network, \textit{how} to represent edges may be as straightforward as relying on the union of multiple different name generator questions. However, determining how to represent a network of household relationships may be more complicated. We first consider how to represent the household network through different \textit{node contractions} on the individual network, and then turn to consider the choice to capture household relationships by relying on specific \textit{household generator} questions.

\subsection{Node contraction} \label{subsubsec:cont_rules}
Node contraction \citep{oxley2006matroid, bollobas2013modern} describes the graph operation when two nodes $v_i$ and $v_j$ in a graph are replaced by one node $w$, and every edge incident to either $v_i$ or $v_j$ becomes incident to $w$. As we reference more generally in this work, a \textit{set} of nodes can be contracted to one node $w$, and every edge incident to any node in the set becomes incident to $w$. 

\paragraph{Basic household contraction.} To make the above statement more precise, in the most basic case, for a graph $G = (V, E)$, where $V$ is a set of nodes and $E$ a set of edges between them, consider a set of node-contraction sets $H=\{H_1, H_2, \dots, H_R \}$ where each $H_r$ is a non-overlapping partition of $V$ such that $\sum_{r}^{R} \abs {H_r}  = N$, so each $v_k \in V$ belongs to some set $H_r$. Then the graph $G'$ which results from contracting these vertices is given by
\begin{equation} \label{eq:contraction}
    \begin{aligned}
    &G' = (V', E')\\
    &V' = \{ w_i \}_{i=1}^{R} \\
    &E' = \left\{ (w_i, w_j) \mid \exists \, v_k \in H_i \textrm{ and } v_\ell \in H_j \textrm{ such that } (v_k, v_\ell) \in E \right\}.
    \end{aligned}
\end{equation}
We note that if the original graph is weighted, directed, undirected, or contains loops, the graph resulting after contraction can be of the same type. For a simple graph, we extend the last condition above to be for $i \neq j$. 

In the literature, simple, unweighted household networks (either directed or undirected) are often constructed by contracting individual networks according to \cref{eq:contraction}\textemdash each individual is replaced by one node, and all of the edges incident to any individual become incident to the household node. However, we note that different \textit{types} of household networks can be constructed through node contraction rules. 

Below we define additional possible node contraction rules that, given an \acl{ind} network $G = (V, E)$ and a set of households and their corresponding members, $H = \{ H_1, H_2, \dots, H_R\}$, produce a \acl{hh} network $G' = (V', E')$. We note that there are many more contraction rules and that many of the below rules can be combined to define additional rules (e.g., one can define a \acl{hh} network by combining weighted and gendered contraction rules from below).

\paragraph{Weighted household contraction.} 
For node contraction on a weighted graph $G= (V, E, W)$, the weight of an edge, denoted $W(w_i, w_j)$ is equal to sum of the weights of the edges incident to any node within each corresponding contraction set. 
\begin{equation*} \label{eq:weighted_con}
    \begin{aligned}
    &G' = (V', E', W')\\
    &V' = \{ w_i \}_{i=1}^{R} \\
    &E' = \{ (w_i, w_j) \mid \exists \, v_k \in H_i \textrm{ and } v_\ell \in H_j \textrm{ such that } (v_k, v_\ell) \in E \},\\
    &W'(w_i, w_j) = \sum_{v_k \in H_i} \sum_{v_\ell \in H_j} W(v_k, v_\ell). 
\end{aligned}
\end{equation*}
The weight of an edge can also be normalized according to the proportion of \acl{inds} in $H_i$ who are connected to any \acl{ind} in $H_j$, capturing a general strength of the edge between the two households. 

\paragraph{Gendered household contraction.} For specified gender $a$, we can adapt \Cref{eq:contraction} such that an edge from household $H_i$ to $H_j$ exists if there is an edge from an \acl{ind} of gender $a$ in $H_i$ to an \acl{ind} of gender $a$ in $H_j$: 
\begin{equation*} \label{eq:gendered_con}
    \begin{aligned}
        E' = \{ (w_i, w_j) \mid \exists \, v_k \in H_i \textrm{ and } v_\ell \in H_j \textrm{ such that } (v_k, v_\ell) \in E    \textrm{ and } gender(v_k)=gender(v_\ell) = a\}.\\
    \end{aligned}
\end{equation*}
By considering that the presence of an edge between two households (or the strength of an edge) may be dependent on the gendered relationships between individuals in each household, we allow for the flexibility to account for potential gendered household labor, gendered sharing of information, and gendered access to relationships that has been observed and theorized in, e.g., \cite{werner1998}, \cite{berti2015adequacy}, and \cite{wallace2002household}. More generally, gender here can be replaced by other contextually-relevant identity markers. Looking ahead, we consider how gendered household networks impact the centrality of households in \Cref{subsec:gender}, and we consider how gendered differences and differences in power (with respect to household decision-making) relate to the \textit{similarity} of a set of nodes in a given context in \Cref{sec:rec} and show examples in \Cref{fig:ex_decision}. 

\paragraph{Household generator questions and layered household contraction.} 
Household networks are typically studied by aggregating information from the individual network level using the basic households rule in \Cref{eq:contraction} (see \Cref{tab:review}). However, the \textit{household strategies} literature we briefly reviewed in \Cref{sec:background} highlights that there is a key difference between individual relationships and those which entire households utilize, benefit from, and maintain. The contraction rules previously discussed are most suitable in contexts wherein it is appropriate to construct the household network \textit{from} the individual network\textemdash as opposed to constructing the household network from particular \textit{household generating questions}. 

This distinction is highlighted by contrasting the implications of two common name generator questions, \textit{``Who would you borrow kerosene or rice from?''} and \textit{``Who do you go to for medical advice?''} \citep[e.g.,][]{banerjee2013}. The first question may generate names of individuals who are stand-ins for an entire \textit{household} (if you go to X's house to ask for rice but X is not home, do you ask whoever answers the door?). The latter question, however, generates a particular \textit{individual} network, distinct from a \textit{shared household} one (the trust involved in asking X for advice about your health may not transfer to X's spouse). Furthermore, how respondents interpret and answer these survey questions is further complicated by the ambiguity as to whether ``you'' is implied to signify the individual survey respondent or the household they represent. That is, in addition to questions soliciting nominees which represent an entire household, in some circumstances the respondent may be answering \textit{on behalf of} their entire household. This ambiguity is dependent on the culture, language, and setting in which the question is being asked.

We can consider that when the individual social network is collected using the union of $P$ different name generators it can also be represented as a \textit{multilayer network} $\mathcal{G} = \{G_1, G_2, \dots, G_P\}$ \citep[see, e.g.,][]{kivela2014multilayer}. As such, we may construct the household network through node contraction on the set of the most relevant household generating layer(s), $\mathcal{L}$, where
$G_L = (V_L, E_L)$ for each $L \in \mathcal{L}$:
\begin{equation*} \label{eq:layered_con}
    \begin{aligned}
    &G' = (V', E')\\
    &V' = \{ w_i \}_{i=1}^{R} \\
    &E' = \{ (w_i, w_j) \mid \exists \, (v_k, v_\ell) \in E_L \textrm{ for some }L \in \mathcal{L} \textrm{ and for } v_k \in H_i, v_\ell \in H_j  \}.
    \end{aligned}
\end{equation*}
That is, if one can contextually distinguish certain name generators as soliciting relationships between households, a household network can be reasonably constructed by only connecting households which are connected in those particular layers.

\subsection{Choices in practice} \label{subsubsec:emp_choice}
To show how the choice to collect, represent, and study a social network at either the individual or household level is made in practice, we review a wide range of empirical studies on social networks in \Cref{tab:review}. We see that social networks are often collected by asking \textit{individuals} name generator questions about their social support systems (sometimes about multiple types of social support). Individuals and their connections are then grouped together with other individuals living in the same household to make a \textit{household} social network, through the ``basic household contraction'' rule discussed above, forming a household network that is the ultimate object of study. 

In reviewing each study we focus on identifying (i) whether the \textit{intervention} or \textit{experimental design} of the study occurs at an individual or household level, (ii) if the social network data is collected by asking individual or household questions, and (iii) if the network studied is the individual or household network. We delineate these characteristics to motivate our work and clearly highlight opportunities for rigorously aligning a research question with its corresponding social network data collection and analysis. In \Cref{sec:rec} and \Cref{fig:ex_decision} we focus on three relatively recent large scale experiments, \cite{banerjee2013}, \cite{banerjee2019}, and \cite{airoldi2024}, which we review in greater detail in \Cref{apx:emp_rev}.

%%%%%%%%%%%%%%%%%%%%%%%%%%%%%%%%%%%%%%%%
%          LANDSCAPE TABLE             %
%%%%%%%%%%%%%%%%%%%%%%%%%%%%%%%%%%%%%%%%
\afterpage{%
  \clearpage% Ensure the table starts on a fresh page
  \newgeometry{margin=.75cm} 
  \begin{landscape}
    \begin{table}
\begin{tabular}{|p{1.2in}|p{.9in}|p{1in}|p{2.2in}|p{4in}|}\hline
\vspace{.05cm}\textbf{Paper} &\vspace{.05cm} \textbf{Network collected} &\vspace{.015cm} \textbf{Network studied} &\vspace{.05cm} \textbf{Context} &\vspace{.05cm} \textbf{Intervention}\\ \hline
  \vspace{.5pt} \cite{alatas2012targeting} & \vspace{.5pt}  Household\tablefootnote{The household network in \cite{alatas2012targeting} is collected by interviewing one individual from each household. The person is asked to list other households they are related to and all social groups that each person in their household belongs to. To construct the household network, an edge connects any two households with individuals in the same social group.}& \vspace{.5pt}  Household & \vspace{.5pt}  The adoption of an agricultural technology across 631 villages in Indonesia & \vspace{.5pt} An individual is targeted with information about the agricultural technology. \\ \hline
 \vspace{.5pt}\cite{banerjee2013} &\vspace{.5pt}  Individual &\vspace{.5pt}  Household\tablefootnote{\cite{banerjee2013} notes the motivation for using the household network is because loan-participation occurs at the household level.} &\vspace{.5pt} Adoption of microfinance loans across 72 villages in Karnakata, India. & \vspace{.5pt} An individual is targeted with information about a microfinance loan \\ \hline
 \vspace{.5pt} \cite{cai2015social} & \vspace{.5pt} Individual\tablefootnote{In \cite{cai2015social} the head of each household is asked to list their five closest friends. Individuals who are \textit{not} head of their household can be included in the individual network only if they are nominated, and may not nominate their own friends. Notably, in this study heads of households are ``almost exclusively male.''} &  \vspace{.5pt} Household &\vspace{.5pt} Adoption of weather insurance across 185 villages in rural China. &\vspace{.5pt}  The head of a randomly targeted household participates in an informational session about the weather insurance product. \\ \hline
 \vspace{.5pt} \cite{chami2017social} &\vspace{.5pt} Individual &\vspace{.5pt} Household\tablefootnote{\cite{chami2017social} also notes the motivation behind their decision to study the household network: ``because community medicine distributors (CMDs) were trained to and have been shown to move from door to door to deliver medicines during MDA.''} &\vspace{.5pt} Fragmentation of the contact network by targeting households with deworming treatments across 17 villages in Uganda. &\vspace{.5pt} With different targeting techniques, a random household is visited by a medical team to administer a deworming treatment. The household is considered noncompliant if \textit{any} individual within the household refuses treatment. \\ \hline
\vspace{.5pt} \cite{banerjee2019} &\vspace{.5pt} Individual &\vspace{.5pt} Household &\vspace{.5pt} The spread of information in 71 villages in Karnakata, India. &\vspace{.5pt} The household corresponding to an individual selected either randomly or by community nomination is given information about a non-competitive raffle.\tablefootnote{In a second phase of experiments in \cite{banerjee2019} on a different set of villages, the information is about vaccination clinics.}\\ \hline
\vspace{.5pt} \cite{alexander2022algorithms} &\vspace{.5pt} Individual &\vspace{.5pt} Individual (representing household) &\vspace{.5pt} Spillover effects of a health intervention under different seeding strategies in 50 residential buildings in Mumbai, India. &\vspace{.5pt} A subset of women are given information about the health benefits of iron-fortified salt and are given coupons to buy the salt and share with their networks. \\ \hline
\vspace{.5pt} \cite{airoldi2024} &\vspace{.5pt} Individual &\vspace{.5pt} Household &\vspace{.5pt} Spillover effects of education about maternal health in 176 rural villages in Honduras. &\vspace{.5pt} A healthworker comes to a random household at regular intervals over the course of 22 months to educate household members about maternal health. \\ \hline
\end{tabular}
\caption{An overview of major social network experiments in networks with household structure. For each study we categorize the level the network is collected and studied at, the level of the intervention, and additional context.}\label{tab:review}
\vspace{-1.5in}
\end{table}
 % Include your table here
  \end{landscape}
  \restoregeometry
  \clearpage% Ensure the subsequent text starts on a new page
}
%%%%%%%%%%%%%%%%%%%%%%%%%%%%%%%%%%%%%%%%
%      End LANDSCAPE TABLE             %
%%%%%%%%%%%%%%%%%%%%%%%%%%%%%%%%%%%%%%%%

\section{Consequences} \label{sec:consequences}
We now turn to explore the consequences that this choice might lead to. First, we consider a simple case of random node contraction on a random graph to theoretically explore how node aggregation affects graph structure. We then explore how aggregating individuals into household nodes on empirical networks can impact (i) local network metrics, (ii) the spread of information over a network and influence maximization, and (iii) node centrality.

In relation to the closely related work on \textit{coarse-graining}, which we mention in \Cref{sec:background}, recent work by \cite{griebenow2019finding} explores how different network properties (e.g., average degree, eigenvector centrality, betweenness) change on simulated networks after aggregating nodes according to a proposed spectral algorithm. The consequences we explore in this section explore how the network properties change under \textit{household-defined node aggregation}.

\subsection{Random graphs with random households}\label{subsub:ER}

To understand how node contraction can impact the structure of the resulting graph, we first explore an example of considering the effect of randomly contracting nodes in a random graph, formalizing a result regarding expected infectious contacts between households noted in \cite{doenges2024sir}. Consider a graph on $n$ nodes where an edge exists between any two nodes with independent probability $p$, called an \ER~random graph and denoted $G(n, p)$. In \Cref{apx:ER} we show that when households are of the exact same size, the resulting graph can be explained by a \ER~generative process, but with fewer nodes and with a different edge probability. This observation, however, no longer holds when we consider households of differing sizes.

\begin{proposition}
   Consider a random graph $G(V,E)$ generated by a $G(n,p)$. A corresponding unweighted household graph $G'(V',E')$ is constructed by constructing $m$ disjoint node sets $H_1, H_2, \dots, H_R$ of size $\ell_1, \ell_2, \dots \ell_R$ by choosing nodes uniformly at random, and contracting these sets into corresponding nodes $V' = \{w_r\}_{r=1}^R$, where $n = \sum_{r=1}^R\ell_r$. Then $G'$ is no longer described by an \ER~model, but an \text{inhomogeneous \ER~model} \citep{bollobas2007phase} where edges do not exist with the same probability across nodes. Instead, an edge $(w_q, w_r)$ exists independently with probability $1 - (1-p)^{\ell_q \ell_r }.$ The expected degree of node $w_k$ is $\mathbb{E}[deg(w_k)] = \sum_{j = 1}^R 1 - (1-p)^{\ell_k \ell_j}.$ For a simple graph, $j \neq k.$ 
\end{proposition}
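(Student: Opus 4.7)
The plan is to unpack the basic household contraction rule (\Cref{eq:contraction}) directly into a short probability calculation, then confirm mutual independence of the resulting edge indicators across contracted pairs, and finally apply linearity of expectation for the degree. The whole argument is combinatorial and I do not expect a genuine obstacle; the only care needed is in separating the (random) vertex-to-household assignment from the (deterministic) sizes $\ell_1, \ldots, \ell_R$, and invoking vertex-exchangeability of $G(n,p)$ to justify working with an arbitrary fixed partition of the specified size profile.

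First I would fix two distinct household indices $q \neq r$ and observe that, by the contraction rule, $(w_q, w_r) \in E'$ holds if and only if at least one of the $\ell_q \ell_r$ vertex pairs in $H_q \times H_r$ is an edge of the underlying $G(n,p)$. Because $H_q \cap H_r = \emptyset$, these vertex pairs are all distinct, and under $G(n,p)$ each is present independently with probability $p$; hence the "no edge among these pairs" event has probability $(1-p)^{\ell_q \ell_r}$, and complementation gives the claimed $1 - (1-p)^{\ell_q \ell_r}$. For mutual independence across contracted edges, I would note that the indicator of $\{(w_q, w_r) \in E'\}$ is a function only of the $G$-edges with one endpoint in $H_q$ and the other in $H_r$, so two distinct contracted edges depend on disjoint sub-collections of $G$-edges (because the $H_i$ partition $V$). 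Mutual independence of the underlying Bernoulli edge indicators of $G(n,p)$ therefore transfers to mutual independence of the contracted-edge indicators, which is exactly the defining property of an inhomogeneous \ER~model, with heterogeneous probabilities whenever the products $\ell_q \ell_r$ are not all equal.

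For the expected degree of $w_k$, I would apply linearity of expectation to $\deg(w_k)$, written as a sum of Bernoulli indicators over $j \neq k$, and substitute the marginal probability from the previous step to obtain $\mathbb{E}[\deg(w_k)] = \sum_{j \neq k} \bigl(1 - (1-p)^{\ell_k \ell_j}\bigr)$, matching the statement. Two small subtleties are worth flagging explicitly. First, vertex-exchangeability of $G(n,p)$ is what reduces the random partition into households to an arbitrary fixed one: conditional on any assignment of vertices to households respecting the size profile, the joint law of the $G$-edges is the same, so conditional probabilities coincide with the unconditional ones computed above. Second, the diagonal term $q = r$ is excluded by the simple-graph convention; if self-loops were permitted, the analogous argument on the $\binom{\ell_q}{2}$ intra-household pairs would give a loop probability $1 - (1-p)^{\binom{\ell_q}{2}}$, which one would add to the degree sum.
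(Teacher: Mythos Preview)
Your argument is correct and follows essentially the same route as the paper's proof: compute the complementary probability that none of the $\ell_q \ell_r$ cross-pairs is an edge, then invoke independence and linearity of expectation for the degree. If anything, you are more careful than the paper on two points it leaves implicit---the mutual independence of contracted-edge indicators (via disjoint underlying edge sets) and the reduction from a random to a fixed partition via vertex-exchangeability.
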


\begin{proof}
    In the graph $G$, for any pair of nodes $v_i, v_j \in V$, the edge $(v_i, v_j)$ does not exist with independent probability $1-p$. Now consider node sets $H_q$ and $H_r$. Then for node $v_i \in H_q$, there is probability $(1-p)^{\ell_r}$ that there is no edge between node $v_i$ and any node in set $H_r$. Since there are $\ell_q$ nodes in set $H_q$, and each edge probability is independent, the total probability of no edge existing between any node in $H_q$ and any node in $H_r$ is $(1-p)^{\ell_q \ell_r }$. Thus for the graph $G'$ where any node $w_k$ is constructed by contracting nodes in set $H_k$, then for any pair of nodes $w_q$ and $w_r$ in $G'$, edge $(w_q, w_r)$ exists independently with probability $1 - (1-p)^{\ell_q \ell_r }.$ Thus we can describe $G'$ as a realization of an inhomogeneous \ER~model where the expected degree of node $w_k$ is $\mathbb{E}[deg(w_k)] = \sum_{j = 1}^R 1 - (1-p)^{\ell_k \ell_j}.$ If $G$ and $G'$ are simple graphs, we enforce $j \neq k$. 
\end{proof}

The point of this example is to show that contracting nodes can significantly alter the structure of the resulting graph, by changing the variation in expected degrees across nodes in the network. With this example in mind, we turn to analyze a set of empirical networks to see how metrics change under household node contraction.
\subsection{Local network properties}\label{subsub:local}

For the empirical analyses which follow, we consider the collection of 72 networks of social support in southern India from \citet{banerjee2013}. In \cite{banerjee2013}, data is collected from all individuals in a household about other individuals in the village they get and give social support to and from. Then, this individual-level network data is aggregated at the household level according to the \textit{basic household contraction} rule we describe above. In the following sections, we thus compare metrics and experiments when considering the individual network or the household network. See \Cref{apx:emp_rev} for more details about the data and the contexts in which it was collected. 

We begin our empirical exploration by considering how two commonly-used local network metrics, degree assortativity and average clustering coefficient, are substantively different depending on whether they are measured on the individual or household network (see \Cref{fig:deg_clus}). Notably, the average clustering coefficient is relatively small when measured on the household networks, and relatively large on the individual networks. Given that clustering coefficient can be used to describe the social health of a community \citep[e.g.,][]{cartwright1956,bearman2004}, we see in the case of these social support networks that staggeringly different conclusions might be drawn depending on whether the individual or household network is analyzed. This further highlights the importance of choosing the most contextually relevant network for understanding and evaluating a community using network methods.

Similarly, over the same set of villages, networks aggregated at the household level largely display disassortative mixing, whereas the corresponding individual networks have positive degree assortativity. This observation echoes the work of \cite{newman2002assortative} and \cite{newman2003social} and the authors' explorations into differences in degree assortativity in social and biological networks. However, in their work they show that social networks are largely assortative and biological networks are largely disassortative. It is thus surprising, in the context of the \citeauthor{banerjee2013} networks, to find both assortativity and disassortativity on the same social networks represented at different granular representations. The assortativity of a network has implications for how a disease is sustained in a community and the resilience (or lack thereof) of a network to the removal of nodes \citep{newman2003social}. As such, we again see that distinct conclusions are drawn regarding these communities, depending on which network is considered further indicating that \textit{the choice of which network to study} has significant implications. 

A similarly peculiar observation to the above was recently made in \cite{kumar2024friendship}: that when analyzing the same village networks \citep{banerjee2013}, their proposed graph property \textit{inversity} (closely related to degree assortativity) was substantively different depending on whether or not it was measured on the household or individual network. In their work, \citeauthor{kumar2024friendship}~propose inversity as a correlation-based metric to characterize when different seeding strategies will be more effective based on the structure of the network\textemdash a positive inversity implies one strategy, and negative another. For more in-depth details, see \cite{kumar2024friendship}.

The surprising aspect of their finding, then, is that, although each network captures the same people living in the same village, different seeding strategies are determined to be more impactful if one analyzes the network at either the household or individual level. Notably, \citeauthor{kumar2024friendship}~write that the result of this finding is that one should consider different strategies dependent on whether or not the intervention is expected to occur at the individual or household level. We expand on their recommendation with particular principled evaluation criteria in \Cref{fig:dec_chart} and \Cref{sec:rec}.

\begin{figure}
    \centering
    \includegraphics[width=0.95\linewidth]{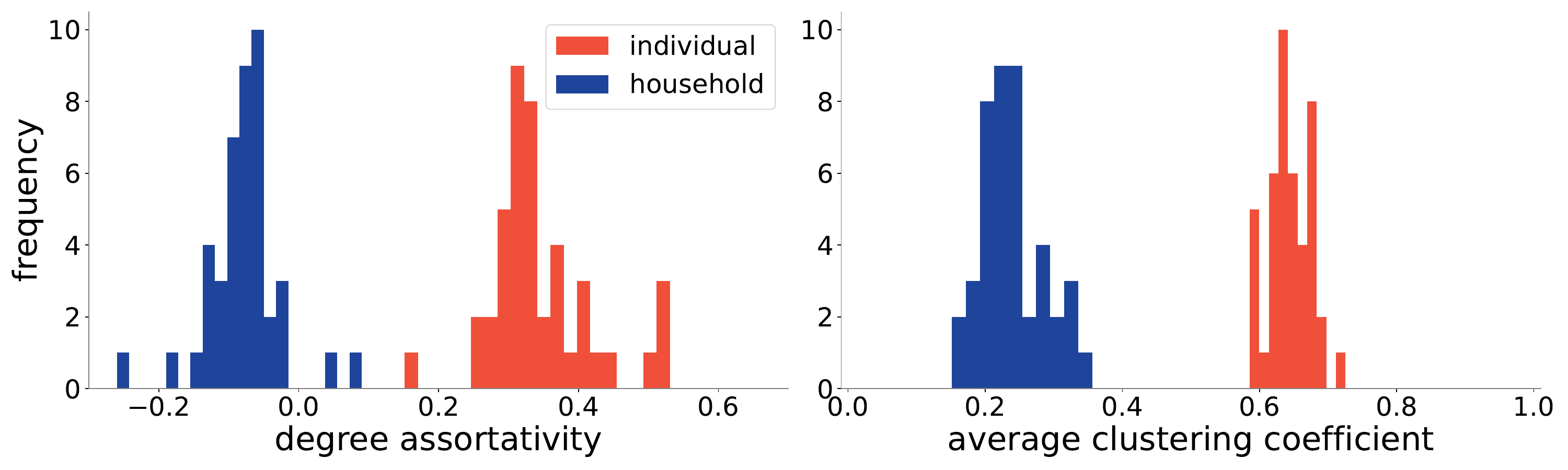}
    \caption{The household and individual networks from \cite{banerjee2013} have substantively different interpretations when considering both the degree assortativity and clustering coefficients.}
    \label{fig:deg_clus}
\end{figure}

To explore what might be causing the shift in inversity in these networks, we notice that within each household, all individuals within a household are completely connected to all other individuals in their household. That is, the network of individuals is essentially made up of many different cliques, loosely connected to other cliques (cliques amongst household members is also how we represent the individual network in \Cref{fig:hh_ind_diagram}). Surprisingly, if we remove all of the edges between individuals in the same household, we see that inversity measured on the individual networks more closely aligns with inversity on the household networks. In fact, we see that for the majority of the villages, the corresponding networks both have positive inversity (see \Cref{fig:inv_nohh}). This observation suggests that the flip in inversity that \citeauthor{kumar2024friendship}~observe can be attributed to the existence of cliques within households. We explore this reasoning further alongside a more in-depth discussion of inversity and additional numerical experiments in \Cref{apx:inversity}.

\begin{figure}
    \centering
    \includegraphics[width=0.8\linewidth]{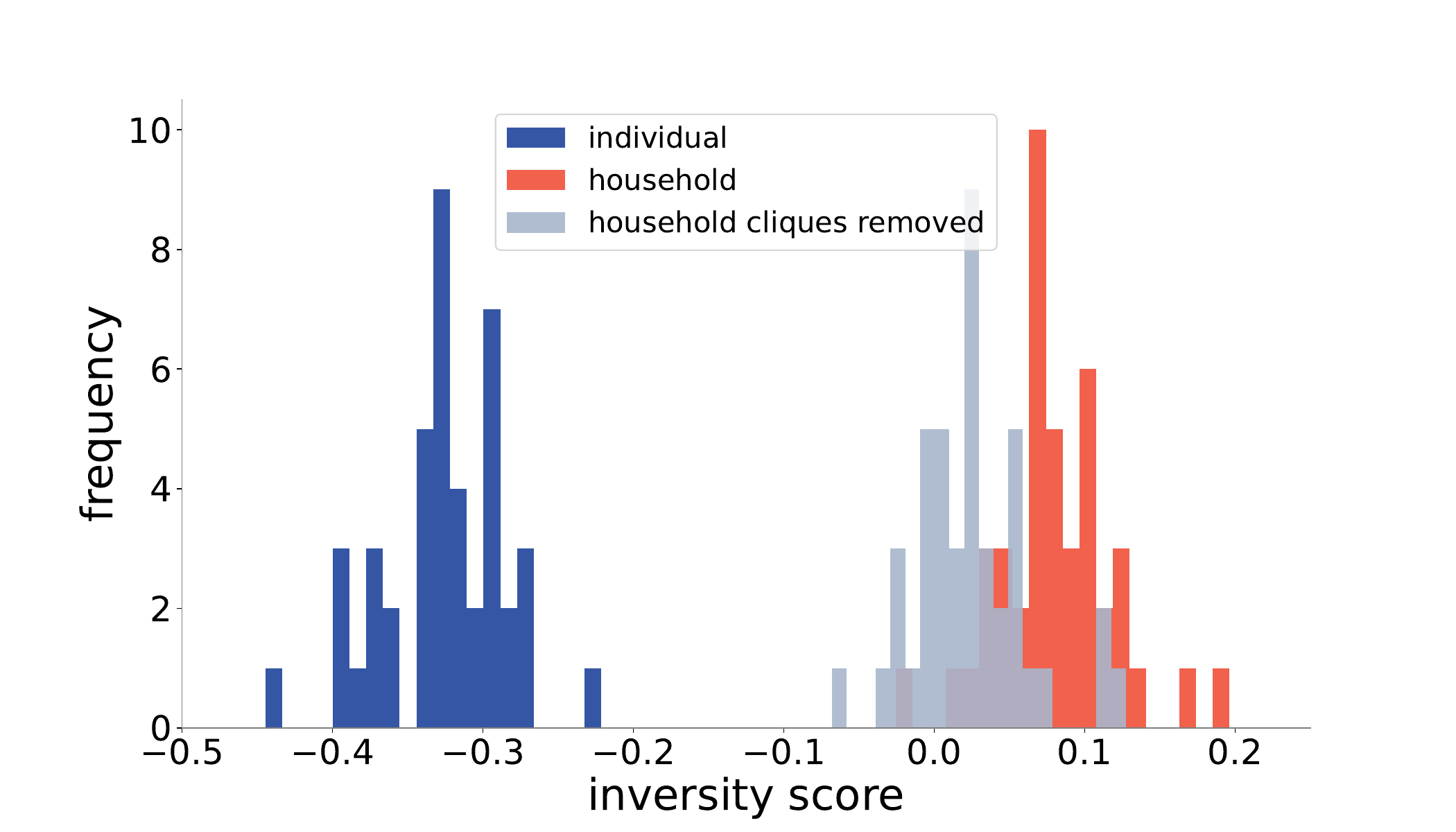}
    \caption{The inversity of the \cite{banerjee2013} networks at the individual %(blue) 
    and household %(red) 
    level. In light grey, we plot the histogram of the inversity of the individual networks across villages when the intrahousehold edges are removed.}
    \label{fig:inv_nohh}
\end{figure}

\subsection{Spread of information over a network}\label{subsec:inf_dif}

We turn now to explore what differences we uncover if the spread of information is modeled on either the individual or household network. To address this question, we conduct an experiment on the individual and household networks from \cite{banerjee2013} to identify which nodes are \textit{influence maximizing}: which nodes when seeded with a piece of information will spread it to a maximal proportion of the network. The problem of influence maximization is relevant for efficiently and effectively distributing a limited resource with maximal impact on the network. On each network, we aim to identify the seed set of $K=10$ nodes which maximize the spread of information using the greedy strategy proposed in \cite{kempe2003maximizing}. 

To construct the set of influence maximizing nodes, we first run $1000$ \textit{independent cascade} simulations\textemdash where the spread of information is modeled as traveling across an edge with probability $p$\textemdash with each separate node as a single starting seed. We add to the seed set a single node which spreads information to the highest proportion of the network, on average, across the $1000$ simulations. We then iterate this process until the seed set has $K=10$ nodes, adding the node which, conditional on the nodes already in the seed set, reaches the highest proportion of nodes in the network across the simulations. \cite{kempe2003maximizing} showed that this greedy algorithm for assembling a seed set for an independent cascade is guaranteed to produce a set that reaches a population within a factor of $(1-1/e)$ of the optimal seed set (while also showing that the underlying optimization problem itself is NP-hard).

To relax the idea of completely connecting households in the individual network, we propose decomposing the individual adjacency matrix into the parts made up of \textit{intrahousehold} and \textit{extrahousehold} edges: edges which connect individuals in the same and in different households, respectively. 
\begin{align*}
    \A = \A_{extra} + \A_{intra},
\end{align*}
For individual nodes $i$ and $j$ belonging to households $H_k$ and $H_\ell$, respectively, then $\A_{extra}$ and $\A_{intra}$ are defined as,
\begin{align*}
    A_{extra, \, ij} = \begin{cases}
        0 \textrm{ for } H_k = H_\ell\\
        1 \textrm{ otherwise}
    \end{cases} A_{intra, \, ij} = \begin{cases}
        0 \textrm{ for } H_k \neq H_\ell\\
        1 \textrm{ otherwise.}
    \end{cases}
\end{align*}
Decomposing the individual adjacency matrix in this way allows us to consider the impact and appropriateness of intrahousehold edges in different contexts. Specifically, this decomposition allows us to vary the weight we give to intrahousehold connections:
\begin{align*}
    \A^*_p = \A_{extra} + (1- p) \A_{intra},
\end{align*}
for $p \in [0,1]$. Here, $p$ can allow for flexibility in the presence of relationships between people in the same household, by varying the probability that an intrahousehold connection exists in a given context. For example, small $p$ represents a high probability that information, once given to one household member, will spread to the other members of the household. By contrast, large $p$ would capture a situation where there is low probability of information spreading between household members. 

By making this distinction, we build upon the modeling recommendations of \cite{newman2003social}, wherein they consider how to project bipartite affiliation networks onto a corresponding social network. In the context of modeling the spread of information, writing adjacency matrices in this way can represent a rate of the spread of information within households and make a direct modeling connection to the work of, e.g., \cite{nickerson2008} who found that information does not deterministically reach all household members. Furthermore, our modeling choice to allow for variation between the probability of information spreading between individuals within and beyond households is reflected in the epidemiological work considering how household transmission affects the spread of a disease \citep[e.g.,][]{ball2012sir}.

In modeling the independent cascade on both individual and household networks we fix the probability of information passing between two households to be the same. For the purpose of our experiments, we model this probability with $q = 0.05$.\footnote{We chose this extrahousehold information passing probability to be relatively low so that information didn't completely saturate the networks in our experiments. In preliminary experiments with higher values of $q$, we found an even bigger difference between influence-maximizing seed sets, most probably because when information saturates the network there are many equivalently influence-maximizing sets, and comparison between sets becomes arbitrary.} For the individual networks we assume information passes along intrahousehold and extrahousehold edges at different rates: whereas we keep the probability of information passing between extrahousehold edges as $q=0.05$, we model the intrahousehold probability as $(1-p)=0.7$. Note that with $(1-p)=1$, independent cascade unfolds in the exact same way on the individual and household networks, and with $(1-p) = 0$ information can only pass along edges between individuals in different households. As such we choose $(1-p)=0.7$ to model an intermediate scenario, where information spreads within households, but not deterministically. This setup follows observations from, e.g., \cite{nickerson2008} and \cite{cheema2023canvassing} whereby intrahousehold transmission is high but not guaranteed (see \Cref{sec:background}). 

We construct the influence-maximizing seed sets for both the individual and household networks, $S_i$ and $S_h$, respectively, for each village. To compare the sets of nodes across individual and household networks, we map each seed in the individual sets to their corresponding households. In \Cref{fig:jac} we report the overlap between these pairs of sets across all of the villages. We see that for many of the villages, there is very little overlap between the sets of influence-maximizing households. This observation implies that a different set of households (or individuals) will be identified as influence-maximizing seeds, dependent on whether or not the household or individual network is studied. 

\begin{figure}
    \centering
    \includegraphics[width=0.6\linewidth]{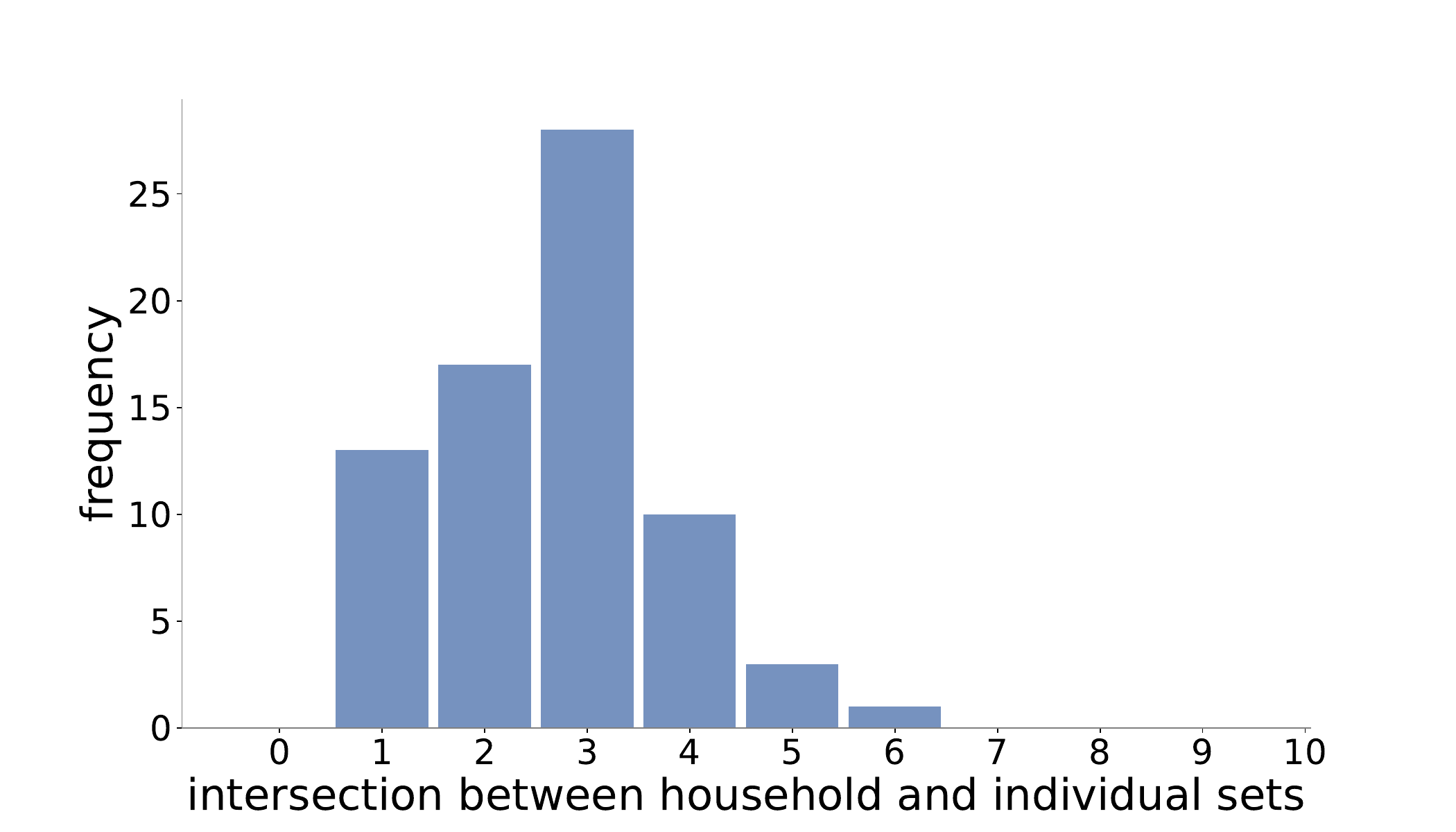}
    \caption{The intersection between the sets of the 10 most influential households, found by greedily maximizing the average proportion of nodes reached over 1000 independent cascades on the household and individual networks from \cite{banerjee2013}. To compare sets, we map the set of the 10 influence maximizing individuals to their corresponding households.}
    \label{fig:jac}
\end{figure}

However, this observation does not imply that the \textit{impact} of choosing different seeds will be vastly different. There may be many local optima of very similar quality and furthermore, because we use a greedy algorithm for constructing both $S_i$ and $S_h$, the differences in output from this algorithm need not even necessarily reflect differences in the true optimal sets. Thus, we inspect the difference between the expected proportion of the network reached over 100 iterations of an independent cascade on both the household and individual networks, using either $S_i$ or $S_h$ as the seed set. 

When running the independent cascade on the household network with $S_i$ as the seed set, we directly map the individuals to their corresponding households to get the corresponding set of nodes on the household network, $S_{i\rightarrow h}$. When running the independent cascade on the individual network with $S_h$ as the seed set, we map the households in $S_h$ to a set of corresponding individuals, which we call $S_{h\rightarrow i}$. To do so, we map households to a random subset of the individuals who live in the household (where the subset is selected uniformly at random with $(1-p)=0.7$ to match our numerical experiments on the individual network). This mapping means that $S_{h\rightarrow i}$ is a much larger set than $S_i$. We then consider the difference between the expected proportion of the network reached when running 1000 independent cascades on either the household or individual network, using $S_i$ and $S_h$ (or their corresponding mappings, $S_{i\rightarrow h}$ and $S_{h \rightarrow i}$, respectively) as seed sets.

We find that the difference in the expected proportion of nodes reached by the independent cascade only varies by at most around 4\% of the network in both the experiments on the individual and household network. However, we notice that the difference is greater when we simulate the spread of information on the individual network. 

This observation provides perspective for the results of our analysis above: although the \textit{sets} of the most influential nodes are different depending on whether or not they are selected from the individual or household network, the \textit{impact} of this difference depends on what the set is being used for. If the set is identified to seed information in hopes that it will have maximal influence, then there may not be a significant difference between the proportion of the network reached. If the set is used to compare influence-maximizing nodes with a set of nodes chosen in some other way (e.g., according to their status as village leaders), then the conclusions will depend on the choice of which network they are selected from.

\subsection{Diffusion centrality and gendered networks}\label{subsec:gender}
As we discussed in \Cref{sec:background} and \Cref{sec:choices}, the relationships represented in networks of households can be supported, maintained, and connected by gendered labor and gendered relationships. Moreover, as we consider in \Cref{sec:rec}, some experimental frameworks consider the spread of information that is more relevant and accessible to certain genders. 

As a particular example of gendered networks, \cite{alexander2022algorithms} considers the network of women living in a shared apartment building in Mumbai, India. The researchers tell a subset women about a particular type of fortified salt available alongside its health benefits and track how information about the salt travels throughout the network and which non-targeted women end up buying the product. The work is an excellent example where the researchers collected and studied a network that represented the process and context of the experiment. We see that in this network representation, women\textemdash as the household members responsible for buying groceries\textemdash stand-in for their entire household, and the network only connects them to other women with the same responsibilities. As we will see in \Cref{subsec:rec_ex}, when we apply the recommendations we propose in \Cref{fig:dec_chart} to this experimental setting, we ultimately recommend studying the same network (see \Cref{fig:ex_decision}). 

In this section we consider the consequences of representing a household network with gendered edges, as we formalized in \Cref{eq:gendered_con}. Moving from contagion to centrality, we consider how the diffusion centrality \citep{banerjee2013, banerjee2019} of nodes changes depending on whether or not we weight household edges according to the gendered relationships between individuals in each household. 

Diffusion centrality, a variation of eigenvector centrality, as defined for directed networks in \cite{banerjee2019}, can be interpreted as how extensively a piece of information spreads as a function of an initially informed node. Consider a directed, strongly connected network with corresponding weighted adjacency matrix $\boldsymbol{w}$ where $w_{ij}$ gives the relative probability that $i$ shares information with $j$. Then for a piece of information originating at node $i$, in each time period an informed node tells each neighbour $j$ the information with independent probability $w_{ij}$. If we consider that this process happens for $0 < T < \infty $ time periods, diffusion centrality is specifically defined as:
\begin{align}
    DC(\boldsymbol{w}, T):= \left(\sum_{t=1}^T (\boldsymbol{w})^t \right) \boldsymbol{1}.
\end{align}

To explore how the diffusion centrality of a household depends on whether or not we only consider gendered connections, we first subset the household networks from \cite{banerjee2013} to just the surveyed households, for which we have information about the gender of the individuals in each household. We find the diffusion centrality of each household in (the connected component of) the subsetted household network: $DC(\B, T) \vec{1} := DC_{B}$, where $\B$ is the adjacency matrix corresponding to the household network $G'$.
Next, we create two separate gendered networks: one where two households are connected only if a female from one house is connected to a female of the other house, $G'_{FF}$, and another where two households are connected only if a male from one house is connected to a male in the other house, $G'_{MM}$. We then find the diffusion centrality for each household in these networks, $DC_{B_{FF}}$ and $DC_{B_{MM}}$, respectively. 

Similar to the sensitivity analysis that \cite{banerjee2013} conduct \citep[see Table S7 in][]{banerjee2013} with respect to changes in $T$, we investigate how the use of gendered networks impacts the resulting diffusion centrality of each household by finding the correlations between the diffusion centralities, $DC_{B}, DC_{B_{FF}}$, and $DC_{B_{MM}}$. We visualize the Pearson correlation coefficient between each pair of vectors across all 72 villages in \Cref{fig:gender_corrs}. Notably, we see that the diffusion centrality on the male-to-male network is strongly positively correlated the diffusion centrality on the entire household network, implying that the spread of information on the household network is dominated by household relationships defined by male-to-male individual relationships. The diffusion centrality of households connected by considering only female-to-female relationships is also positively correlated with the household diffusion centrality, although we observe much more spread across villages. This observation indicates that the household network defined without considering gender does not necessarily capture the most relevant network, especially in the case where information may be more likely to spread across female-to-female connections. 

Overall we see that the households with the highest diffusion centrality vary across these gendered networks, highlighting the importance of understanding \textit{who} a piece of information is relevant to \citep[e.g., the women in each household who buy groceries, see][]{alexander2022algorithms} and \textit{what social norms} allow for communication about particular subject matter to take place \citep[e.g., sharing information about maternal health may be stigmatized as only relevant and shareable among women, see][]{berti2015adequacy}. Note that for the purpose of the experiments in this section we only considered weighting the household networks with possible extremes, zero or one, by either including an edge or not depending on the gender of individual connections between those households. However, this weight can be relaxed in contexts where information sharing is less gendered, such that, for example, household edges defined by only only male-to-male individual connections are weighted lower than household edges defined by female-to-female individual connections but are not completely absent.

Thus we conclude this section with another instance of how conclusions and interventions can meaningfully vary (in this case, by misidentifying the most central households in a village) depending on the choices of how to represent a network. We see that understanding how gender (or, as we discuss in \Cref{sec:rec}, other relevant similarities between individuals) plays a role in an experimental context is important for properly considering how people and their households are connected. 

\begin{figure}
    \centering
    \includegraphics[width=0.6\linewidth]{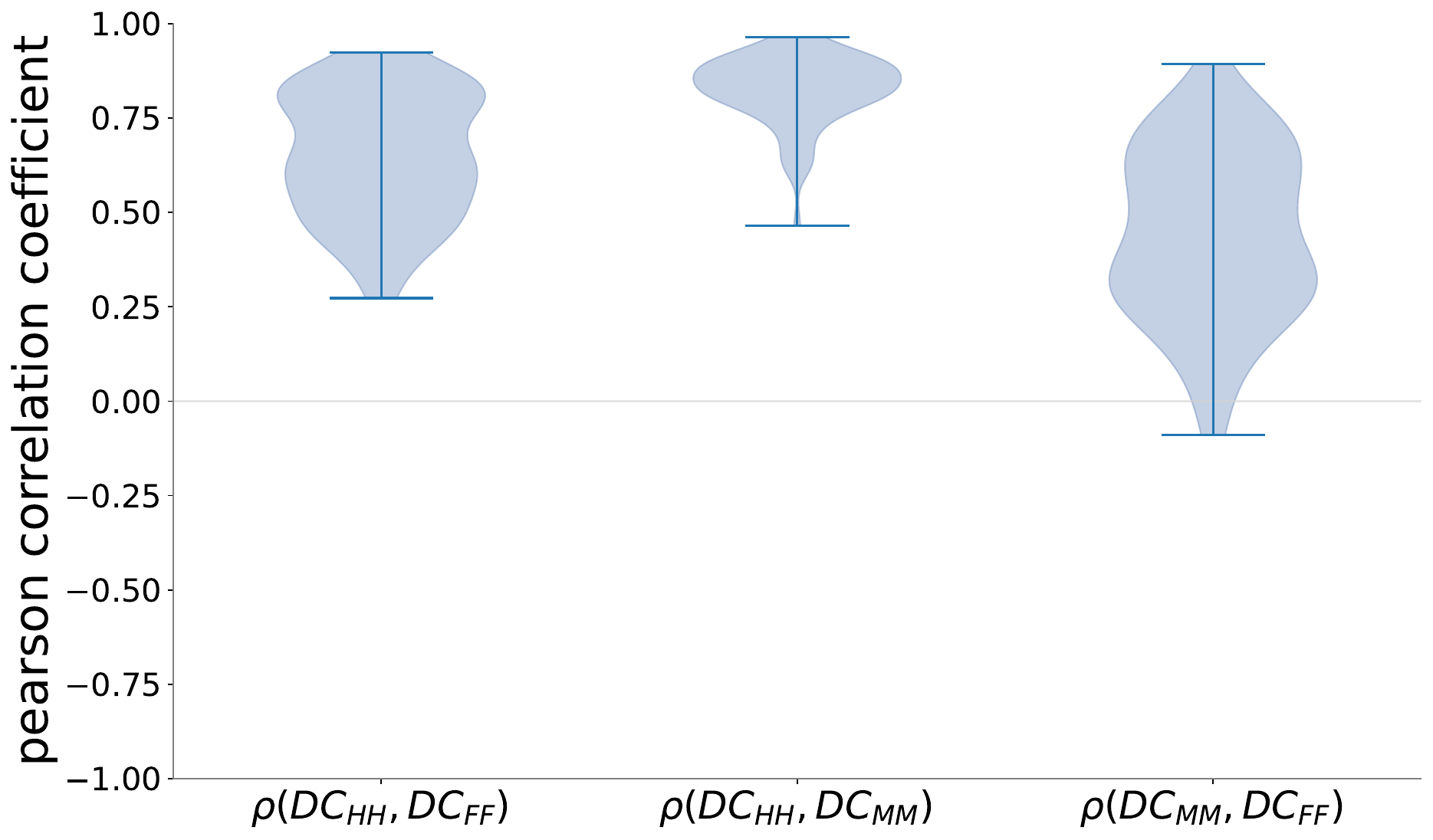}
    \caption{The correlations between each pair of diffusion centrality vectors across all 72 villages. We observe that the diffusion centrality of households varies across these gendered networks. A village with high/low correlation corresponds to households having similar/different centralities across different network definitions.}
    \label{fig:gender_corrs}
\end{figure}

\section{Recommendations} \label{sec:rec}
As we saw in \Cref{sec:consequences}, the insights from a given network analysis can substantively vary between individual and household networks. In order to help guide the decision of which network should be analyzed in a certain context, in this section we discuss specific recommendations based on the work we reviewed in \Cref{sec:background}. 

\subsection{Entitativity criteria} \label{subsec:entitativity_rec}
We first adapt \citeauthor{campbell1958common}'s criteria for entitativity\textemdash \textit{proximity}, \textit{similarity}, \textit{common fate}, and \textit{internal diffusion}\textemdash specifically to their potential use in assessing households as social entities. We also propose a dependent hierarchy of these entitativity criteria in \Cref{fig:dec_chart}, where we propose a decision tree for determining what network to study in a given context and how to weight edges or include intrahousehold edges. We note that while we do not adapt \citeauthor{campbell1958common}'s criterion of \textit{resistance to intrusion} here, we discuss doing so as a path for interesting future work in \Cref{sec:conclusion}.

\paragraph{Proximity.}
Proximity of individuals is determined by their physical closeness to one another. Although it may seem straightforward to assess this criteria for individuals in a household (because they all live in the same house), in some places and on certain timescales, household members may actually not be proximate. For example, assessing proximity must be approached with care in places where it’s common for one or more household members to leave the village/town for work for extended periods of time \citep[see][]{niehof2011conceptualizing}, where the boundary of the household is more culturally nebulous \citep{yotebieng2018household}, or when household and family composition changes in longitudinal studies \citep{steele2019modeling}. Determining the boundaries of what defines a household, who is in a household, and whether household members are proximate in an experimental context depends both on the timescale of the intervention and on cultural norms regarding household membership. These considerations aside, an aggregate of individuals living in the same household necessarily meet this criterion. Assessing if the household meets the following entitativity criteria \textit{alongside} proximity considers whether or not the household boundary is what \citeauthor{campbell1958common} calls an \textit{illusion}\textemdash a boundary that satisfies one criterion but not others.

\paragraph{Similarity.} Similarity of individuals within their household unit should be particularly assessed with respect to an experimental question and/or intervention. Similarity here not only refers to a relevancy and accessibility of the intervention with respect to different household members' identities \citep{berti2015adequacy}, but also to their roles and responsibilities within the household \citep{becker1993treatise}. For example, if the experiment is focused on the spread of information, similarity must be assessed with respect to the content of the information that the individuals are exposed to and if the individuals' differences in gender, power, or household responsibilities impact the sharing of information. Guiding questions to assess this criterion in a given context might be: 
\begin{itemize}
    \item \textit{Is there a gender-related norm, stigma, or significance regarding sharing this type of information?}
    \item \textit{Does one person in the household make financial decisions on behalf of their family that might be relevant to this experimental context?}
    \item \textit{Does the subject or outcome of the experiment lie primarily within the specialization of one household member?}
\end{itemize}

\paragraph{Common fate.} This criterion addresses, within the specific context of the intervention and tracked outcomes for a particular study, whether or not individuals living in the same household share the same outcome. If the outcome and/or intervention occurs at an individual level, the individuals likely do not share a common fate. In \citeauthor{campbell1958common}'s introduction of this criteria he alludes to particles within a rock\textemdash if the rock is thrown, all the particles in the rock will be thrown together, and thus share a common fate. Indeed, the concept of common, or \textit{linked} fate in sociology has a long history, particularly in the context of considering racial, ethnic, and socioeconomic groups \citep{brewer2000superordinate, simien2005race}. In the context of individual and household networks, we consider individuals within a household as sharing a common fate when a decision or intervention implicates all of the household members (e.g., the decisions to take on a microfinance loan, install a new roof, or purchase a new household grocery item are all decisions in which the individuals within the same household share a common fate).

\paragraph{Internal diffusion.} This criterion considers whether information (or a behaviour, or a disease) spreads to all individuals in a household once one household member is exposed to it. This criterion is particularly relevant and important when assessing entitativity of households in experiments concerned with the spread of information (or disease dynamics), because modeling contagion on the household network necessarily implies that once a household is exposed to information, \textit{all individuals} within the household are as well. 

If a household network is the most appropriate network to study in a given scenario, particular attention should be given to \textit{which} questions are generating the network, as well as \textit{who} is being asked the questions. As we propose in \Cref{fig:dec_chart}, it is most important to capture \textit{household} connections as distinct from the aggregate of individual connections when there is no internal diffusion amongst household members. In this case, it is important to capture only the relationships between individuals which represent a broader connection between the households as a whole. We discussed this distinction in the discussion of \textit{household generator questions and layered household contraction} in \Cref{sec:choices}, wherein we pointed out the difference between the significance of the name generators, \textit{``Who do you borrow kerosene or rice from?''} and \textit{``Who do you go to for medical advice?''} Whereas the first question is still asked of \textit{individuals}, it represents a larger relationship between \textit{households}, and thus becomes the relevant relationship to consider in this scenario. 

\paragraph{Consistent metrics.} We add this criterion to \citeauthor{campbell1958common}'s set of criteria for use in the particular setting of analyzing networks, and is helpful in cases where determining entitativity might be non-obvious. This criterion is meant to assess whether or not the particular network metric of interest (e.g., an influence-maximizing seed set, clustering coefficient, inversity, or diffusion centrality) is qualitatively consistent regardless of which network is used. As an alternative to testing the consistency of a network metric, methodological work by \cite{koster2018family} proposes a generalized linear mixed model for distinguishing household and individual effects in social networks. Employing this model as an evaluation of this criteria to statistically evaluate the extent to which individual intra-household relationships are shared between household members may indicate when aggregating connections within households is a representative assumption. In the case of inconsistent metrics, we recommend analysis of the individual network. 

\subsection{Examples} \label{subsec:rec_ex}
In this section we consider three examples for how to assess each of the entitativity criteria in practice. To do so, we utilize the decision tree in \Cref{fig:dec_chart} to determine which network to study in the contexts of the experiments considered in \cite{banerjee2013}, \cite{alexander2022algorithms}, and \cite{airoldi2024}. We provide relevant context for each experiment in \Cref{apx:emp_rev} and evaluate each criteria in detail in \Cref{fig:ex_decision}. Notably, the networks we recommend studying differ from the networks that were actually studied in \cite{banerjee2013} and \cite{airoldi2024}. However, in the context of \cite{alexander2022algorithms}, our recommendation is aligned with the network studied in the original research, where the individual female network was studied.

For the setting and intervention context of \citet{banerjee2013}, in which an unweighted household network was studied in the original work, we instead recommend studying the individual network with weighted edges and with interhousehold edges only present if those connections were specifically reported. Although we determine that household members share a \textit{common fate} in this setting (the success of the intervention is tracked at the household level), the absence of \textit{similarity} and \textit{internal diffusion} between household members ultimately determines our recommendation to study a weighted individual network. In this context, it is not necessarily true that household members are similar in their financial decision-making roles \citep[e.g.,]{holvoet2005impact}, and thus we recommend weighting edges between individuals who do hold that primary household responsibility more heavily than between individuals who do not make financial decisions. As such, because not all household members necessarily share the same incentives or interest in this topic, we suggest that there is not necessarily internal diffusion in this context, and thus we recommend only connect individuals in the same household if those edges were individually reported.

In the context of \citet{airoldi2024}, where the intervention and outcome tracked is the spread of information about maternal health, we recommend studying a household network with edges weighted according to the gender of relationships between individuals in each household. This recommendation is in contrast to the unweighted household network studied in the original work. In this context, where outcomes are measured at the household level, we assess that household members share a \textit{common fate}. Furthermore, because the intervention occurs at the household level with health information presented to every household member, it is reasonable to assume that there is \textit{internal diffusion} and thus that households should be connected if any individual in one is connected to any individual in another \citep[this is how households are connected in ]{airoldi2024}. However, household members are not \textit{similar} in this context: there are likely gendered differences in how individuals discuss and share gendered health information \citep[see, e.g.,][for a discussion of how gender plays a role in family-planning decisions, discussions, and outcomes in Honduras]{berti2015adequacy}. As such, we recommend weighting the household network according to how many female-to-female connections there are between households. Under this recommendation, relationships between households where only men have relationships with other men will have a lower weight than relationships between households where women are connected to other women. This recommendation accounts for the relative differences in how gendered health information like maternal health is shared.

In each of the above experimental settings, a key goal of the work is to understand the spread of information over a network. As such, and as we saw above in \Cref{subsubsec:emp_choice}, the choice of \textit{how} to represent the network is necessary to properly understand the relevance of the conclusions which are drawn from network analysis. Our recommendations provide a principled approach for making this choice. Although out of scope for the present work, reconsidering the analysis of the above experiments with the networks we recommend, and exploring how the resulting conclusions differ is subject for interesting and important future work.

%%%%%%%%%%%%%%%%%%%%%%%%%%%%%%%%%%%%%%%%
%          LANDSCAPE TABLE             %
%%%%%%%%%%%%%%%%%%%%%%%%%%%%%%%%%%%%%%%%
\newgeometry{margin=2cm} 
\begin{landscape}
\begin{figure}
    \centering
    \includegraphics[width = \linewidth]
    {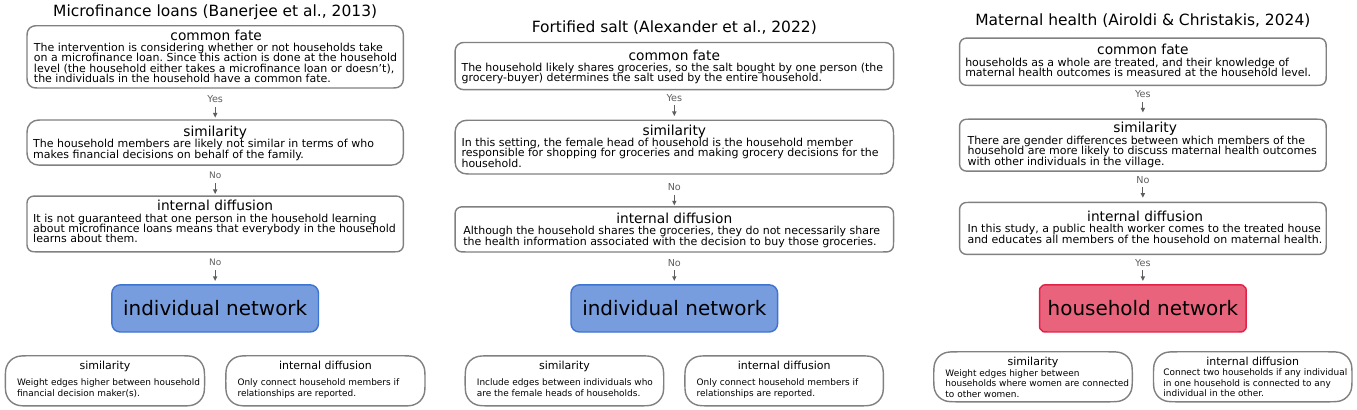}
    \caption{In this figure we show three examples \citep{banerjee2013, alexander2022algorithms, airoldi2024} of how to use the decision chart given by \Cref{fig:dec_chart} to determine what level of  network to use in a given context. For each example, we recommend a different type of network, with varying edge weights and intrahousehold connections. Our recommendations differ from the network analyzed in the original studies for both \citet{banerjee2013} and \citet{airoldi2024}.}
    \label{fig:ex_decision}
\end{figure}
\end{landscape}
\restoregeometry
%%%%%%%%%%%%%%%%%%%%%%%%%%%%%%%%%%%%%%%%
%      End LANDSCAPE TABLE             %
%%%%%%%%%%%%%%%%%%%%%%%%%%%%%%%%%%%%%%%%

\section{Conclusions} \label{sec:conclusion}
In this work we stress that there is a distinction\textemdash and a subsequent choice\textemdash between analyzing social networks at the individual and household level. In addition to emphasizing and formalizing this choice, we motivate their difference in the context of related work ranging from anthropology to political science to sociology. We explore the consequences of the choice between each network with range of empirical examples wherein analysis (local network metrics, information diffusion, and node centrality) of the network substantively change depending on which network the analysis is conducted on. 

We specifically draw upon the literature of \textit{household strategies} \citep[e.g.,][]{werner1998, wallace2002household, niehof2011conceptualizing}, which considers the role of household networks and critiques of studying household exchange relationships from an ethnographic framework. Notably, we rely on this literature for drawing our attention to the gendered labor that goes into creating and maintaining household relationships, as well as the fact that the concept and interpretation of a \textit{household} varies greatly across different cultures and economies. We employ the observations from this literature to consider that gender, power, and hierarchy should be evaluated in the context of an experiment to evaluate how edges in the resulting network should be weighted. To explore how gender impacts network structure, we construct male and female household networks from the networks in \cite{banerjee2013}, only connecting two households if a male in one is connected to a male in the other, or a female in one to a female in the other, respectively. We find that the centralities of each household differs across these gendered networks, highlighting one possible consequence of misspecifying the relevance of gender between households.

We provide a systematic recommendation for assessing which network should be analyzed in a given experimental and cultural setting grounded in the theoretical work of \cite{campbell1958common} and his criteria for \textit{entitativity}. We provide a dependent hierarchy of these entitativity criteria in the network setting by defining a decision tree based on their evaluation with respect to an experimental setting. We provide three examples for how to use this decision tree in the context of the work of \cite{banerjee2013}, \cite{alexander2022algorithms}, and \cite{airoldi2024}, ultimately recommending analysis on the weighted individual, gendered individual, and gendered household networks, respectively.

We hope that this work provides a clear opportunity for future research on networks to be more contextually-rigorous. In addition to providing specific recommendations for future network research scientists to employ, we hope that by emphasizing a difference between the individual and household networks a line of interesting research questions open. 
 
First, it is interesting to explore if the recommendations we make regarding the choice of network to study would substantively change the conclusions of the original experiments. For example, the work of \cite{banerjee2019} is focused on assessing the correlation between the individuals that villages \textit{nominate} as good gossips and the households with the highest diffusion centrality. The authors conclude that peoples' accurate perceptions of highly central people can be well-modeled with their model of diffusion centrality. It is interesting future work to then consider \textit{which network} these perceptions (and the spread of gossip) are occurring on\textemdash that is, if people's gossip nominees are even \textit{more} correlated with the individuals with highest diffusion centrality as measured on a more particular network, e.g., an individual network with edges weighted according to gender. 

In our discussion of the entitativity criterion of \textit{proximity} in \Cref{subsec:entitativity_rec}, we mentioned the additional difficulty in assessing this criterion in the case of longitudinal network studies or in cultures where the household unit is more ambiguously defined. Rigorous recommendations for how to define the boundaries of a household when its members do not remain consistent over time has been meaningfully addressed in, e.g., \cite{steele2019modeling}. Determining how to define a household \textit{network} in a similar setting of changing household memberships and boundaries is an avenue for practical and impactful future work. Furthermore, as mentioned earlier, we did not consider \citeauthor{campbell1958common}'s criterion of assessing a household's \textit{resistance to intrusion}. In many respects, we can consider the very act of network data collection as an act of intrusion. Scientists have long considered impacts of \textit{the observer effect}: how the collection of data from a system impacts the system itself. With respect to social data, anthropologists center the consideration of how the observer effect can either invalidate, bias, or strengthen ethnographic research findings \citep{lecompte1982, monahan2010}. Considering how the intrusion of social network data collection impacts the communities being researched\textemdash and the subsequent data\textemdash and incorporating households' reactions to that intrusion in evaluating entitativity, is an interesting direction for future work. 

We hope that by detailing the choices, consequences, and recommendations for studying individual and household networks, this work serves to encourage a rigorous chain of questions about the data decisions that we make when we aim to use networks as tools for exploring experimental questions about communities and the social ties between the people within them.

\section*{Acknowledgments and Disclosure of Funding} 
IA acknowledges support from the Omidyar Postdoctoral Fellowship, NSF GRFP, and the Knight-Hennessy Scholars Fellowship. PSC acknowledges support from the NSF under award \#DMS-2407058. JU acknowledges partial support from the ARO MURI award \#W911NF-20-1-0252 and the NSF under award \#2143176. We would like to thank Eleanor Power, Justin Weltz, and Kerice Doten-Snitker for helpful conversations and feedback.

\clearpage
\bibliography{mybib}

\begin{thebibliography}{52}
\providecommand{\natexlab}[1]{#1}
\providecommand{\url}[1]{\texttt{#1}}
\expandafter\ifx\csname urlstyle\endcsname\relax
  \providecommand{\doi}[1]{doi: #1}\else
  \providecommand{\doi}{doi: \begingroup \urlstyle{rm}\Url}\fi

\bibitem[Airoldi and Christakis(2024)]{airoldi2024}
Edoardo~M Airoldi and Nicholas~A Christakis.
\newblock Induction of social contagion for diverse outcomes in structured experiments in isolated villages.
\newblock \emph{Science}, 384\penalty0 (6695):\penalty0 eadi5147, 2024.

\bibitem[Alatas et~al.(2012)Alatas, Banerjee, Hanna, Olken, and Tobias]{alatas2012targeting}
Vivi Alatas, Abhijit Banerjee, Rema Hanna, Benjamin~A Olken, and Julia Tobias.
\newblock Targeting the poor: evidence from a field experiment in indonesia.
\newblock \emph{American Economic Review}, 102\penalty0 (4):\penalty0 1206--1240, 2012.

\bibitem[Alexander et~al.(2022)Alexander, Forastiere, Gupta, and Christakis]{alexander2022algorithms}
Marcus Alexander, Laura Forastiere, Swati Gupta, and Nicholas~A Christakis.
\newblock Algorithms for seeding social networks can enhance the adoption of a public health intervention in urban india.
\newblock \emph{Proceedings of the National Academy of Sciences}, 119\penalty0 (30):\penalty0 e2120742119, 2022.

\bibitem[Auld and Zhang(2013)]{auld2013inter}
Joshua Auld and Lei Zhang.
\newblock Inter-personal interactions and constraints in travel behavior within households and social networks.
\newblock \emph{Transportation}, 40:\penalty0 751--754, 2013.

\bibitem[Ball and Sirl(2012)]{ball2012sir}
Frank Ball and David Sirl.
\newblock An sir epidemic model on a population with random network and household structure, and several types of individuals.
\newblock \emph{Advances in Applied Probability}, 44\penalty0 (1):\penalty0 63--86, 2012.

\bibitem[Banerjee et~al.(2013)Banerjee, Chandrasekhar, Duflo, and Jackson]{banerjee2013}
Abhijit Banerjee, Arun~G Chandrasekhar, Esther Duflo, and Matthew~O Jackson.
\newblock The diffusion of microfinance.
\newblock \emph{Science}, 341\penalty0 (6144):\penalty0 1236498, 2013.

\bibitem[Banerjee et~al.(2019)Banerjee, Chandrasekhar, Duflo, and Jackson]{banerjee2019}
Abhijit Banerjee, Arun~G Chandrasekhar, Esther Duflo, and Matthew~O Jackson.
\newblock Using gossips to spread information: Theory and evidence from two randomized controlled trials.
\newblock \emph{The Review of Economic Studies}, 86\penalty0 (6):\penalty0 2453--2490, 2019.

\bibitem[Basse and Feller(2018)]{basse2018analyzing}
Guillaume Basse and Avi Feller.
\newblock Analyzing two-stage experiments in the presence of interference.
\newblock \emph{Journal of the American Statistical Association}, 113\penalty0 (521):\penalty0 41--55, 2018.

\bibitem[Bearman and Moody(2004)]{bearman2004}
Peter~S Bearman and James Moody.
\newblock Suicide and friendships among american adolescents.
\newblock \emph{American journal of public health}, 94\penalty0 (1):\penalty0 89--95, 2004.

\bibitem[Becker(1993)]{becker1993treatise}
Gary~S Becker.
\newblock \emph{A treatise on the family: Enlarged edition}.
\newblock Harvard university press, 1993.

\bibitem[Becker and Dietz(1995)]{becker1995effect}
Niels~G Becker and Klaus Dietz.
\newblock The effect of household distribution on transmission and control of highly infectious diseases.
\newblock \emph{Mathematical biosciences}, 127\penalty0 (2):\penalty0 207--219, 1995.

\bibitem[Berti et~al.(2015)Berti, Sohani, Costa, Klaas, Amendola, and Duron]{berti2015adequacy}
Peter~R Berti, Salim Sohani, Edith~da Costa, Naomi Klaas, Luis Amendola, and Joel Duron.
\newblock An adequacy evaluation of a maternal health intervention in rural honduras: the impact of engagement of men and empowerment of women.
\newblock \emph{Revista Panamericana de Salud P{\'u}blica}, 37:\penalty0 90--97, 2015.

\bibitem[Bhatti et~al.(2017)Bhatti, Dahlgaard, Hansen, and Hansen]{bhatti2017voter}
Yosef Bhatti, Jens~Olav Dahlgaard, Jonas~Hedegaard Hansen, and Kasper~M Hansen.
\newblock How voter mobilization from short text messages travels within households and families: Evidence from two nationwide field experiments.
\newblock \emph{Electoral Studies}, 50:\penalty0 39--49, 2017.

\bibitem[Boccaletti et~al.(2023)Boccaletti, De~Lellis, Del~Genio, Alfaro-Bittner, Criado, Jalan, and Romance]{boccaletti2023structure}
Stefano Boccaletti, Pietro De~Lellis, CI~Del~Genio, Karin Alfaro-Bittner, Regino Criado, Sarika Jalan, and Miguel Romance.
\newblock The structure and dynamics of networks with higher order interactions.
\newblock \emph{Physics Reports}, 1018:\penalty0 1--64, 2023.

\bibitem[Bollob{\'a}s(2013)]{bollobas2013modern}
B{\'e}la Bollob{\'a}s.
\newblock \emph{Modern graph theory}, volume 184.
\newblock Springer Science \& Business Media, 2013.

\bibitem[Bollob{\'a}s et~al.(2007)Bollob{\'a}s, Janson, and Riordan]{bollobas2007phase}
B{\'e}la Bollob{\'a}s, Svante Janson, and Oliver Riordan.
\newblock The phase transition in inhomogeneous random graphs.
\newblock \emph{Random Structures \& Algorithms}, 31\penalty0 (1):\penalty0 3--122, 2007.

\bibitem[Brewer(2000)]{brewer2000superordinate}
Marilynn~B Brewer.
\newblock Superordinate goals versus superordinate identity as bases of intergroup cooperation.
\newblock In \emph{Social identity processes: Trends in theory and research}, pages 118--132. SAGE Publications Ltd, 2000.

\bibitem[Cai et~al.(2015)Cai, Janvry, and Sadoulet]{cai2015social}
Jing Cai, Alain~De Janvry, and Elisabeth Sadoulet.
\newblock Social networks and the decision to insure.
\newblock \emph{American Economic Journal: Applied Economics}, 7\penalty0 (2):\penalty0 81--108, 2015.

\bibitem[Campbell(1958)]{campbell1958common}
Donald~T Campbell.
\newblock Common fate, similarity, and other indices of the status of aggregates of persons as social entities.
\newblock \emph{Behavioral science}, 3\penalty0 (1):\penalty0 14, 1958.

\bibitem[Campbell and Lee(1991)]{campbell1991name}
Karen~E Campbell and Barrett~A Lee.
\newblock Name generators in surveys of personal networks.
\newblock \emph{Social networks}, 13\penalty0 (3):\penalty0 203--221, 1991.

\bibitem[Cartwright and Harary(1956)]{cartwright1956}
Dorwin Cartwright and Frank Harary.
\newblock Structural balance: a generalization of heider's theory.
\newblock \emph{Psychological review}, 63\penalty0 (5):\penalty0 277, 1956.

\bibitem[Chami et~al.(2017)Chami, Ahnert, Kabatereine, and Tukahebwa]{chami2017social}
Goylette~F Chami, Sebastian~E Ahnert, Narcis~B Kabatereine, and Edridah~M Tukahebwa.
\newblock Social network fragmentation and community health.
\newblock \emph{Proceedings of the National Academy of Sciences}, 114\penalty0 (36):\penalty0 E7425--E7431, 2017.

\bibitem[Cheema et~al.(2023)Cheema, Khan, Liaqat, and Mohmand]{cheema2023canvassing}
Ali Cheema, Sarah Khan, Asad Liaqat, and Shandana~Khan Mohmand.
\newblock Canvassing the gatekeepers: A field experiment to increase women voters’ turnout in pakistan.
\newblock \emph{American Political Science Review}, 117\penalty0 (1):\penalty0 1--21, 2023.

\bibitem[Cohen et~al.(2003)Cohen, Havlin, and Ben-Avraham]{cohen2003efficient}
Reuven Cohen, Shlomo Havlin, and Daniel Ben-Avraham.
\newblock Efficient immunization strategies for computer networks and populations.
\newblock \emph{Physical review letters}, 91\penalty0 (24):\penalty0 247901, 2003.

\bibitem[Doenges et~al.(2024)Doenges, G{\"o}tz, Kruchinina, Kr{\"u}ger, Niedzielewski, Priesemann, and Sch{\"a}fer]{doenges2024sir}
Philipp Doenges, Thomas G{\"o}tz, Nataliia Kruchinina, Tyll Kr{\"u}ger, Karol Niedzielewski, Viola Priesemann, and Moritz Sch{\"a}fer.
\newblock Sir model for households.
\newblock \emph{SIAM Journal on Applied Mathematics}, 84\penalty0 (4):\penalty0 1460--1481, 2024.

\bibitem[Ferrali et~al.(2022)Ferrali, Grossman, Platas, and Rodden]{ferrali2022registers}
Romain Ferrali, Guy Grossman, Melina~R Platas, and Jonathan Rodden.
\newblock Who registers? village networks, household dynamics, and voter registration in rural uganda.
\newblock \emph{Comparative Political Studies}, 55\penalty0 (6):\penalty0 899--932, 2022.

\bibitem[Fraser(2007)]{fraser2007estimating}
Christophe Fraser.
\newblock Estimating individual and household reproduction numbers in an emerging epidemic.
\newblock \emph{PloS one}, 2\penalty0 (8):\penalty0 e758, 2007.

\bibitem[Gfeller and De~Los~Rios(2007)]{gfeller2007spectral}
David Gfeller and Paolo De~Los~Rios.
\newblock Spectral coarse graining of complex networks.
\newblock \emph{Physical review letters}, 99\penalty0 (3):\penalty0 038701, 2007.

\bibitem[Griebenow et~al.(2019)Griebenow, Klein, and Hoel]{griebenow2019finding}
Ross Griebenow, Brennan Klein, and Erik Hoel.
\newblock Finding the right scale of a network: efficient identification of causal emergence through spectral clustering.
\newblock \emph{arXiv preprint arXiv:1908.07565}, 2019.

\bibitem[Holvoet(2005)]{holvoet2005impact}
Nathalie Holvoet.
\newblock The impact of microfinance on decision-making agency: evidence from south india.
\newblock \emph{Development and change}, 36\penalty0 (1):\penalty0 75--102, 2005.

\bibitem[Itzkovitz et~al.(2005)Itzkovitz, Levitt, Kashtan, Milo, Itzkovitz, and Alon]{itzkovitz2005coarse}
Shalev Itzkovitz, Reuven Levitt, Nadav Kashtan, Ron Milo, Michael Itzkovitz, and Uri Alon.
\newblock Coarse-graining and self-dissimilarity of complex networks.
\newblock \emph{Physical Review E—Statistical, Nonlinear, and Soft Matter Physics}, 71\penalty0 (1):\penalty0 016127, 2005.

\bibitem[Kempe et~al.(2003)Kempe, Kleinberg, and Tardos]{kempe2003maximizing}
David Kempe, Jon Kleinberg, and {\'E}va Tardos.
\newblock Maximizing the spread of influence through a social network.
\newblock In \emph{Proceedings of the ninth ACM SIGKDD international conference on Knowledge discovery and data mining}, pages 137--146, 2003.

\bibitem[Kim(2004)]{kim2004geographical}
Beom~Jun Kim.
\newblock Geographical coarse graining of complex networks.
\newblock \emph{Physical review letters}, 93\penalty0 (16):\penalty0 168701, 2004.

\bibitem[Kivel{\"a} et~al.(2014)Kivel{\"a}, Arenas, Barthelemy, Gleeson, Moreno, and Porter]{kivela2014multilayer}
Mikko Kivel{\"a}, Alex Arenas, Marc Barthelemy, James~P Gleeson, Yamir Moreno, and Mason~A Porter.
\newblock Multilayer networks.
\newblock \emph{Journal of Complex Networks}, 2\penalty0 (3):\penalty0 203--271, 2014.

\bibitem[Klein and Hoel(2020)]{klein2020emergence}
Brennan Klein and Erik Hoel.
\newblock The emergence of informative higher scales in complex networks.
\newblock \emph{Complexity}, 2020\penalty0 (1):\penalty0 8932526, 2020.

\bibitem[Koster(2018)]{koster2018family}
Jeremy Koster.
\newblock Family ties: the multilevel effects of households and kinship on the networks of individuals.
\newblock \emph{Royal Society Open Science}, 5\penalty0 (4):\penalty0 172159, 2018.

\bibitem[Kumar et~al.(2024)Kumar, Krackhardt, and Feld]{kumar2024friendship}
Vineet Kumar, David Krackhardt, and Scott Feld.
\newblock On the friendship paradox and inversity: A network property with applications to privacy-sensitive network interventions.
\newblock \emph{Proceedings of the National Academy of Sciences}, 121\penalty0 (30):\penalty0 e2306412121, 2024.

\bibitem[LeCompte and Goetz(1982)]{lecompte1982}
Margaret~D LeCompte and Judith~Preissle Goetz.
\newblock Problems of reliability and validity in ethnographic research.
\newblock \emph{Review of educational research}, 52\penalty0 (1):\penalty0 31--60, 1982.

\bibitem[Lundberg(2008)]{lundberg2008gender}
Shelly Lundberg.
\newblock Gender and household decision-making.
\newblock In \emph{Frontiers in the Economics of Gender}, pages 132--150. Routledge, 2008.

\bibitem[Monahan and Fisher(2010)]{monahan2010}
Torin Monahan and Jill~A Fisher.
\newblock Benefits of ‘observer effects’: lessons from the field.
\newblock \emph{Qualitative research}, 10\penalty0 (3):\penalty0 357--376, 2010.

\bibitem[Newman(2002)]{newman2002assortative}
Mark~EJ Newman.
\newblock Assortative mixing in networks.
\newblock \emph{Physical review letters}, 89\penalty0 (20):\penalty0 208701, 2002.

\bibitem[Newman and Park(2003)]{newman2003social}
Mark~EJ Newman and Juyong Park.
\newblock Why social networks are different from other types of networks.
\newblock \emph{Physical review E}, 68\penalty0 (3):\penalty0 036122, 2003.

\bibitem[Nickerson(2008)]{nickerson2008}
David~W Nickerson.
\newblock Is voting contagious? evidence from two field experiments.
\newblock \emph{American political Science review}, 102\penalty0 (1):\penalty0 49--57, 2008.

\bibitem[Niehof(2011)]{niehof2011conceptualizing}
Anke Niehof.
\newblock Conceptualizing the household as an object of study.
\newblock \emph{International Journal of Consumer Studies}, 35\penalty0 (5):\penalty0 488--497, 2011.

\bibitem[Oxley(2006)]{oxley2006matroid}
James~G Oxley.
\newblock \emph{Matroid theory}, volume~3.
\newblock Oxford University Press, USA, 2006.

\bibitem[Schmink(1984)]{schmink1984household}
Marianne Schmink.
\newblock Household economic strategies: Review and research agenda.
\newblock \emph{Latin American Research Review}, 19\penalty0 (3):\penalty0 87--101, 1984.

\bibitem[Simien(2005)]{simien2005race}
Evelyn~M Simien.
\newblock Race, gender, and linked fate.
\newblock \emph{Journal of Black Studies}, 35\penalty0 (5):\penalty0 529--550, 2005.

\bibitem[St-Onge et~al.(2021)St-Onge, Thibeault, Allard, Dub{\'e}, and H{\'e}bert-Dufresne]{st2021social}
Guillaume St-Onge, Vincent Thibeault, Antoine Allard, Louis~J Dub{\'e}, and Laurent H{\'e}bert-Dufresne.
\newblock Social confinement and mesoscopic localization of epidemics on networks.
\newblock \emph{Physical review letters}, 126\penalty0 (9):\penalty0 098301, 2021.

\bibitem[Steele et~al.(2019)Steele, Clarke, and Kuha]{steele2019modeling}
Fiona Steele, Paul~S Clarke, and Jouni Kuha.
\newblock Modeling within-household associations in household panel studies.
\newblock \emph{The Annals of Applied Statistics}, 13\penalty0 (1):\penalty0 367--392, 2019.

\bibitem[Wallace(2002)]{wallace2002household}
Claire Wallace.
\newblock Household strategies: their conceptual relevance and analytical scope in social research.
\newblock \emph{Sociology}, 36\penalty0 (2):\penalty0 275--292, 2002.

\bibitem[Werner(1998)]{werner1998}
Cynthia Werner.
\newblock Household networks and the security of mutual indebtedness in rural kazakstan.
\newblock \emph{Central Asian Survey}, 17\penalty0 (4):\penalty0 597--612, 1998.

\bibitem[Yotebieng and Forcone(2018)]{yotebieng2018household}
Kelly~A Yotebieng and Tannya Forcone.
\newblock The household in flux: plasticity complicates the unit of analysis.
\newblock \emph{Anthropology in Action}, 25\penalty0 (3):\penalty0 13--22, 2018.

\end{thebibliography}
\bibliographystyle{plainnat}

\appendix
\section{Node contraction on an ~\ER~random graph}\label{apx:ER}
In \Cref{subsub:ER}, we show how when contracting random node sets of unequal size in a network generated from an \ER model, the resulting network is no longer \ER. This is not the case when contracting random node sets of equal size. 

\begin{proposition}
    Consider a random graph $G(V,E)$ generated by an \ER~model with $n$ nodes where the probability of an edge between two nodes existing occurs independently with probability $p$. Consider a corresponding unweighted household network $G'(V', E')$ constructed by randomly contracting $R$ disjoint node sets $H_1, H_2, \ldots, H_R$ of size $\ell$ where $n = R\ell$, according to the basic household contraction \Cref{eq:contraction}. Then $G'$ is also described by an \ER~random graph, $G'(R, 1 - (1-p)^{\ell^2})$.
\end{proposition}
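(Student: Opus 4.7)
The plan is to derive this as a direct specialization of the inhomogeneous \ER~result proven immediately above. First I would condition on the random partition $H_1,\ldots,H_R$, since the partition and the edges of $G$ are drawn from independent sources of randomness. Conditional on any such partition, the earlier proposition applies verbatim and yields that, for each pair $q \neq r$, the household-edge indicator for $(w_q,w_r)\in E'$ is Bernoulli with parameter $1-(1-p)^{\ell_q\ell_r}$. Substituting $\ell_q = \ell_r = \ell$ collapses this to the common value $1-(1-p)^{\ell^2}$ across all pairs.

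Next I would argue mutual independence of the household-edge indicators, since the earlier proposition was stated in terms of pairwise marginal probabilities and this stronger statement is what is needed to conclude an \ER~distribution. For any two distinct unordered household pairs $\{H_q,H_r\}$ and $\{H_s,H_t\}$, the sets of potential individual edges between them, namely $\{(v_k,v_\ell): v_k\in H_q, v_\ell\in H_r\}$ and $\{(v_k,v_\ell): v_k\in H_s, v_\ell\in H_t\}$, are disjoint, because the households $H_i$ are pairwise disjoint and a cross-household edge is unambiguously associated with the product $H_i \times H_j$ that contains it. Hence the two household-edge indicators depend on disjoint collections of the independent $\mathrm{Bernoulli}(p)$ edge variables of $G$, and are therefore independent; the same disjointness argument extends to any finite collection of household-edge indicators.

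Combining common Bernoulli marginals with mutual independence, conditional on the partition, $G'$ has the distribution $G(R,\, 1-(1-p)^{\ell^2})$. Since this conditional distribution does not depend on which equal-sized partition was drawn, the unconditional distribution of $G'$ is also $G(R,\, 1-(1-p)^{\ell^2})$, as claimed.

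The only place I would expect to take real care is the independence step: it would be easy to confirm merely the correct marginals and stop. The main obstacle, such as it is, is verifying that the household-edge events are each determined by genuinely disjoint sub-collections of the underlying Bernoulli coin flips, rather than overlapping sub-collections. Once disjointness of the blocks $H_i$ is invoked, this reduces to a short bookkeeping check, and the equal-size hypothesis $\ell_1=\cdots=\ell_R=\ell$ is exactly what converts the inhomogeneous edge probabilities from the previous proposition into a single uniform probability.
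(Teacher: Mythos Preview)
Your proposal is correct and follows essentially the same route as the paper: compute the marginal probability that at least one edge crosses between two households, then invoke independence of the underlying Bernoulli edge variables to conclude the household edges are i.i.d.\ $\mathrm{Bernoulli}\bigl(1-(1-p)^{\ell^2}\bigr)$. The paper's own argument is a direct computation rather than an explicit appeal to the inhomogeneous proposition, and it treats independence in a single sentence; your version is more careful in spelling out the disjoint-edge-sets justification for mutual independence and in handling the conditioning on the partition, but the substance is the same.
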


\begin{proof}
    In the graph $G$, for any pair of nodes $v_i$ and $v_j$, edge $(v_i, v_j)$ does not exist with probability $(1-p)$. Now consider node sets $H_q$ and $H_r$. Then for node $v_i \in H_q$, there is probability $(1-p)^\ell$ that there are zero edges between node $v_i$ and any node in set $H_r$. Since there are $\ell$ nodes in set $H_q$, the total probability of no edge existing between any node in $H_q$ and any node in $H_r$ is $(1-p)^{\ell^2}$. 
    Because all edges in $G$ exist with independent probability, the probability of edges between any two sets $H_q$ and $H_r$ are also independent. For the graph $G'$ where any node $w_k$ is constructed by contracting nodes in set $H_k$, then for any pair of nodes $w_q$ and $w_r$ in $G'$, edge $(w_q, w_r)$ exists with probability $1 - (1-p)^{\ell^2}.$ Thus we can describe $G'$ as an \ER~random graph $G'(R, 1 - (1-p)^{\ell^2}).$ 
\end{proof}

\section{Detailed review of referenced network studies}\label{apx:emp_rev}

\paragraph{Diffusion of microfinance.} \cite{banerjee2013} is an exploration of how information about a microfinance loan (and subsequent uptake of a loan product) travels through villages in Karnataka, India. The study is interested in developing a model for diffusion and using this model to measure \textit{diffusion centrality} of individuals in the village. They compare the centrality of the village leaders who were first targeted with information about the microfinance loans with the eventual adoption of the loans in households across the village. Considering this study in the context of the characterizations we set above, we see that in this work: (i) the intervention (leaders being targeted) occurs at the individual level, the outcome (loan uptake) occurs at the household level, and the research question occurs at both the individual and household level; (ii) the network data is collected by taking the union of 12 separate individual relationship name generators and household networks are created by aggregating nodes and their individual edges; (iii) the authors study the household networks. In \Cref{sec:consequences}, we explore the individual and household level data that was collected across 72 villages. It is important to note that in this work, individuals in the same household are completely connected with one another\textemdash that is, all individuals living in the same house have an edge between one another. 

\paragraph{Iron-fortified salt.} \cite{alexander2022algorithms} considers the network of women living in shared apartment buildings in Mumbai, India and how information travels between them. For each of 50 buildings, women are asked to report other female heads of households in the building who they are connected to, across five different name generators. The researchers tell a subset of the women about a particular type of fortified salt available alongside its health benefits and track how information about the salt travels throughout the network and which non-targeted women end up buying the product and understanding its health benefits. Here, (i) the intervention and outcome happen at the individual level (female head of household), (ii) the network data is collected by taking the union of 5 separate name generators where only female heads of households were considered, (iii) the authors study the individual network of female heads of households. 

\paragraph{Maternal and child health.} In a large-scale experiment, \cite{airoldi2024} measure if there is more information spillover on social networks when using one-hop seeding\footnote{For the one-hop seeding, a household is selected at random, then an individual in the household is selected, one of their individual neighbours is selected at random, and then that person's household is targeted.}\citep{cohen2003efficient} as compared to other seeding strategies. Although we do not use any data from this work, we do use it as an example in \Cref{sec:rec} and \Cref{fig:ex_decision}, and thus introduce the experiment here. For this work, network data from 176 villages in Honduras was collected, and, under different targeting techniques, households were picked to receive 22 months of counseling about maternal and child health. Each month, a trained worker would visit the targeted household and provide 1-2 hours of education about maternal health. At both the beginning and end of the study, each household in the village took a comprehensive test meant to capture how their knowledge of maternal and child health outcomes changed throughout the study. Here, we see that (i) the intervention and outcome both happen at the household level, with the information being delivered to the entire household and the entire household's knowledge being tested;(ii) the network data is collected at the individual level by taking the union of three separate name generators; (iii) the network studied is, interestingly, an overlap of the individual and household level, where network metrics were reported on the household level but one-hop targeting was done on a mixture of the individual and household network, similar to the two-stage randomization discussed in \cite{basse2018analyzing}. Although the intervention took place at the household level, at a practical level there was no enforcement of every household member being present for the educational sessions, and the sessions were delivered to whichever household member was home at the time. 

\section{Further explorations into individual and household inversity} \label{apx:inversity}

In this section we expand on possible explanations for the observation noted in \cite{kumar2024friendship}: that the \textit{inversity} when computed for the individual and household networks from \cite{banerjee2013} substantively differs. As mentioned in \Cref{subsub:local}, we find that the difference in the inversity between the household and individual networks can be largely attributed to the fact that in the individual networks in the \cite{banerjee2013} dataset, individuals living in the same household are completely connected to one another. 

In fact, in \cite{kumar2024friendship}, they note that ``We can expect hub-based (or star) networks to have high inversity \dots In contrast, we expect that networks of clusters of various sizes will have low inversity.'' With this intuition in hand, and knowing that the individual networks are quite literally constructed with clusters of completely connected nodes---which get contracted when forming the household networks---the change in inversity which was surprising becomes highly expected. 

To isolate the impact of intrahousehold edges on inversity, we decompose the adjacency matrix so that we may vary the weight we give to intrahousehold connections in different contexts. This is the same decomposition we discuss in more detail in \Cref{subsec:inf_dif}.
\begin{align}
    \A^*_p = \A_{extra} + (1- p) \A_{intra},
\end{align}

\begin{figure}
    \centering
    \includegraphics[width=0.7\linewidth]{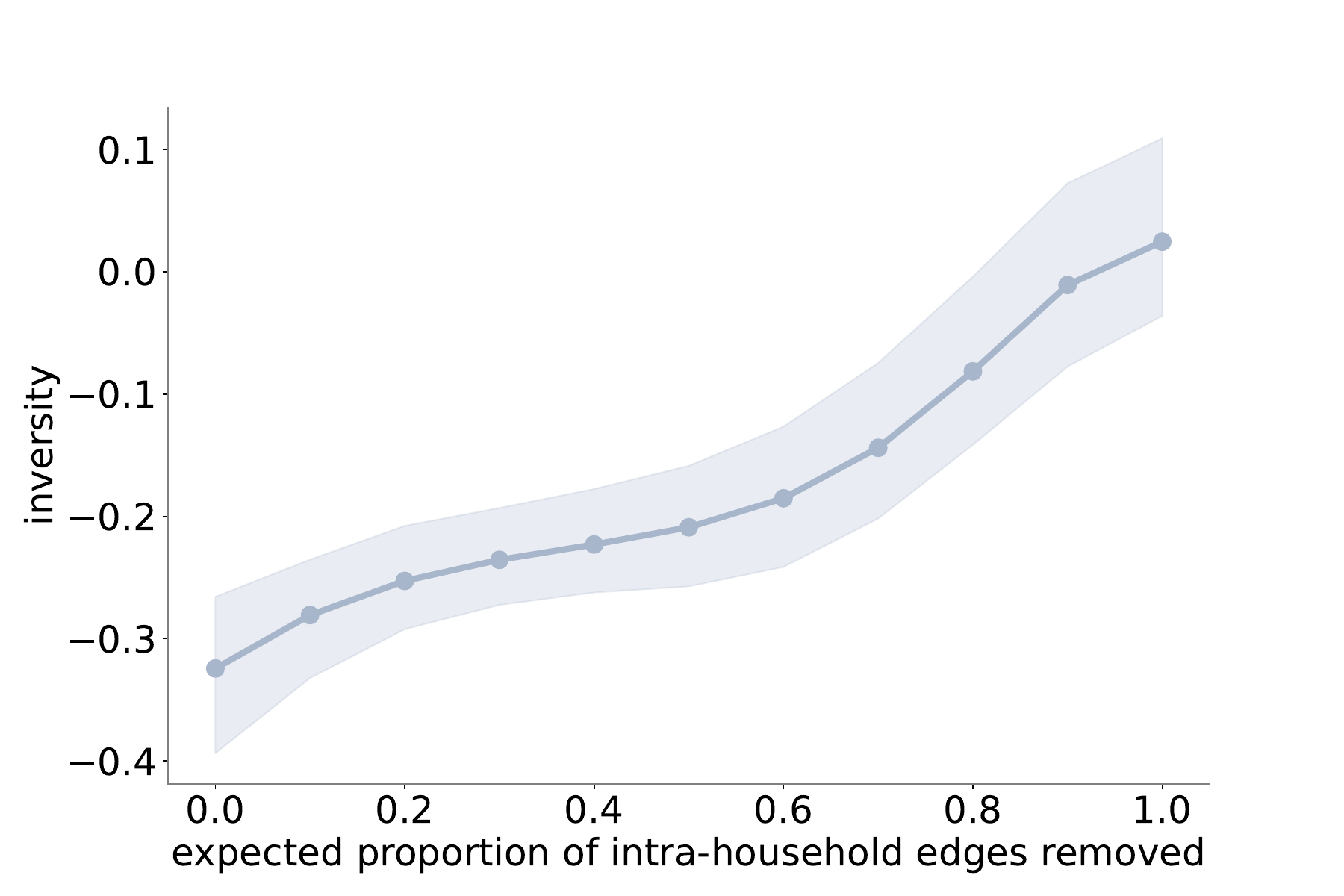}
    \caption{Expected inversity of the individual village networks with intrahousehold edges removed with probability $p$, with $p$ ranging from $0.1$ to $1$, corresponding to adjacency matrix $\A^*_p$. For each village we compute the mean inversity over 100 instances of intrahousehold edge removal, and plot the mean inversity over all villages with the shaded region capturing the 5\% and 95\% quantiles across the mean inversity for all villages.}
    \label{fig:inv_p_dots}
\end{figure}

In \Cref{fig:inv_p_dots} we plot the inversity of the individual village networks with intrahousehold edges removed with probability $p$, with $p$ ranging from 0.1 to 1. We see that inversity monotonically increases with more intrahousehold edges removed. We conclude the exploration into the impact of intrahousehold edges on inversity by considering that in \cite{kumar2024friendship}, the authors give the intuition that inversity is at its most negative when edges connect nodes of the same degree. This intuition gives more context to the impact that the complete subgraphs in the individual network have on the observed inversity\textemdash when individuals are completely connected with their household members, there is less degree variation across edges. We explore this impact further in \Cref{fig:prop_edges}, where we plot the inversity of the individual network against the proportion of the total edges in the network that exist between individuals in the same household. Indeed, we see that networks with a higher proportion of intrahousehold edges have a more negative inversity.

\begin{figure}
    \centering
    \includegraphics[width=0.7\linewidth]{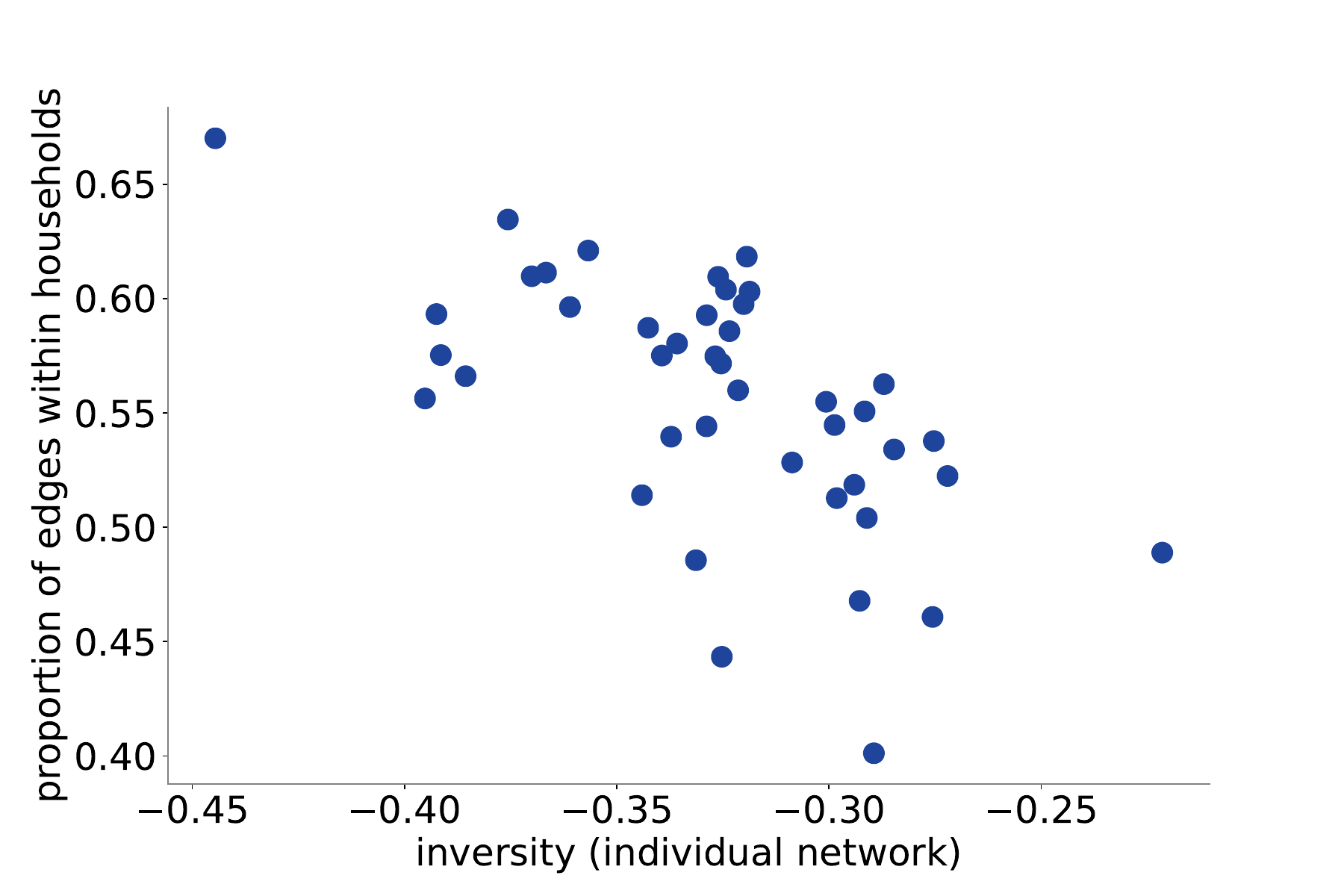}
    \caption{Individual network inversity (x-axis) against the proportion of the total edges in the network that can be attributed to edges within households (y-axis). Networks with a higher proportion of edges due to intrahousehold connections have a more negative inversity, due to less degree variation across edges.}
    \label{fig:prop_edges}
\end{figure} 

\end{document}